\newcommand{\newjointcountertheorem}[3]{\newaliascnt{#1}{#2}\newtheorem{#1}[#1]{#3}\aliascntresetthe{#1}}
\newtheorem{thm}{Theorem}
\theoremstyle{definition}
\theoremstyle{remark}
\def\Snospace~{\S{}}
\DeclareMathOperator{\grad}{grad}
\DeclareMathOperator{\diag}{diag}
\DeclareMathOperator{\Sym}{Sym}
\DeclareMathOperator{\conv}{conv}
\DeclareMathOperator{\GHZ}{GHZ}
\DeclareMathOperator{\EPR}{EPR}
\DeclareMathOperator{\SEP}{SEP}
\DeclareMathOperator{\tr}{tr}
\DeclareMathOperator{\GL}{GL}
\DeclareMathOperator{\SL}{SL}
\DeclareMathOperator{\SU}{SU}
\renewcommand{\epsilon}{\varepsilon}
\newcommand{\restrict}[2]{\left.#1\vphantom{\big|}\right|_{#2}}
\newcommand{\norm}[1]{\lVert#1\rVert}
\newcommand{\abs}[1]{\lvert#1\rvert}
\newcommand{\PP}{\mathbb P}
\newcommand{\CC}{\mathbb C}
\newcommand{\ZZ}{\mathbb Z}
\newcommand{\id}{\openone}
\newcommand{\ketbra}[2]{\lvert#1\rangle\langle#2\rvert}
\begin{document}

\hypersetup{pdfauthor={Michael Walter and Brent Doran and David Gross and Matthias Christandl},pdftitle={Entanglement Polytopes: Multiparticle Entanglement from Single-Particle Information}}
\title{Entanglement Polytopes: Multiparticle Entanglement from Single-Particle Information}
\author{Michael \surname{Walter}}
\affiliation{Institute for Theoretical Physics, ETH Zurich, Wolfgang--Pauli--Strasse 27, CH-8093 Zurich, Switzerland}
\author{Brent \surname{Doran}}
\affiliation{Department of Mathematics, ETH Zurich, R\"amistrasse 101, CH-8092 Zurich, Switzerland}
\author{David \surname{Gross}}
\affiliation{Institute for Physics, University of Freiburg, Rheinstrasse 10, D-79104 Freiburg, Germany}
\author{Matthias \surname{Christandl}}
\affiliation{Institute for Theoretical Physics, ETH Zurich, Wolfgang--Pauli--Strasse 27, CH-8093 Zurich, Switzerland}

\begin{abstract}
  Entangled many-body states are an essential resource for quantum computing and interferometry.
  Determining the type of entanglement present in a system usually requires access to an exponential number of parameters.
  We show that in the case of pure multi-particle quantum states, features of the global entanglement can already be extracted from local information alone.
  This is achieved by associating with any given class of entanglement an entanglement polytope---a geometric object which characterizes the
  single-particle states compatible with that class.
  Our results, applicable to systems of arbitrary size and statistics, give rise to local witnesses for global pure-state entanglement, and can be generalized to states affected by low levels of noise.
\end{abstract}

\maketitle


Entanglement is a uniquely quantum mechanical feature. It is responsible for
fundamentally new effects---such as quantum non-locality---and
constitutes the basic resource for concrete tasks such as quantum
computing \cite{vidal03} and interferometry beyond the standard limit
\cite{leibfriedbarrettschaetzetal04,giovannettilloydmaccone04}.
Considerable efforts have been directed at obtaining a systematic
characterization of multi-particle entanglement; however, our understanding remains limited as the complexity of entanglement scales exponentially with the number of particles \cite{horodecki09}.

In this work, we show that, for pure quantum states, single-particle information alone can serve as a powerful witness to multi-particle entanglement.
In fact, we find that a finite list of linear inequalities characterizes the eigenvalues of the single-particle states in any given class of entanglement.
Their violation provides a criterion for witnessing multi-particle entanglement that (i) only requires access to a linear number of degrees of freedom, (ii) applies universally to quantum systems of arbitrary size and statistics, and (iii) distinguishes among many
important classes of entanglement, including genuine multi-particle entanglement.
Geometrically, these inequalities cut out a hierarchy of polytopes, which captures all information about the global pure-state entanglement deducible from local information alone. Our methods are sufficiently robust to be applicable to situations where the state is affected by low levels of noise.

Formally, a pure state is said to be entangled if it
cannot be written as
a product
$\lvert \psi^{(1)} \rangle \otimes \ldots \otimes \lvert \psi^{(N)} \rangle$ \cite{horodecki09}.
Two states can be considered to belong to the same entanglement class if they can be converted into
each other with finite probability of success using
local operations and classical communication (stochastic
LOCC, or SLOCC)
\cite{bennettpopescurohrlichetal00,durvidalcirac00}.
For small systems, these entanglement classes are well-understood.
In the simplest scenario of three qubits (two-level systems),
there exist two classes of genuinely entangled states of very different
nature: the first contains the famous Greenberger--Horne--Zeilinger (GHZ) state
$\frac 1 {\sqrt{2}} ( {\lvert \uparrow\uparrow\uparrow \rangle} + {\lvert \downarrow\downarrow\downarrow \rangle} )$, which exhibits a particularly
strong form of quantum correlations
\cite{greenbergerhornezeilinger89}; the second contains the W state
$\frac 1 {\sqrt{3}} ( {\lvert \uparrow\uparrow\downarrow \rangle} + {\lvert \uparrow\downarrow\uparrow \rangle} + {\lvert \downarrow\uparrow\uparrow \rangle} )$
\cite{durvidalcirac00}.
Whereas states in the W class can be approximated to arbitrary precision by states from the GHZ class, the converse is not
true---implying stronger entanglement of the GHZ class \cite{durvidalcirac00}.
Already for four particles there exist infinitely many
entanglement classes \cite{verstraetedehaenedemooretal02}, and the
number of parameters required to determine the
class grows exponentially with the particle number. As a result,
only sporadic results have been obtained for larger systems, despite the enormous
amount of literature dedicated to the problem \cite{horodecki09}.


Our approach to multi-particle entanglement is based on establishing a connection to
the one-body quantum marginal problem, or $N$-representability problem in
quantum chemistry.
This fundamental problem about quantum correlations asks which
single-particle density matrices $\rho^{(1)}, \dots,
\rho^{(N)}$ can appear as the reduced density matrices of a
globally pure quantum state. 
Its solution is easily seen to depend only on the eigenvalues
$\vec \lambda^{(k)}$ of the densities and allows for an
elegant mathematical description: the set of possible
vectors $\vec\lambda = (\vec\lambda^{(1)}, \ldots, \vec\lambda^{(N)})$ forms a convex
polytope \cite{christandlmitchison06,klyachko04,daftuarhayden04} whose
defining inequalities can be computed algorithmically
\cite{klyachko04,daftuarhayden04}.
For fermions, the most famous such inequality is the Pauli principle \cite{colemanyukalov00,klyachko06}.

\begin{figure}
\includegraphics[width=\linewidth]{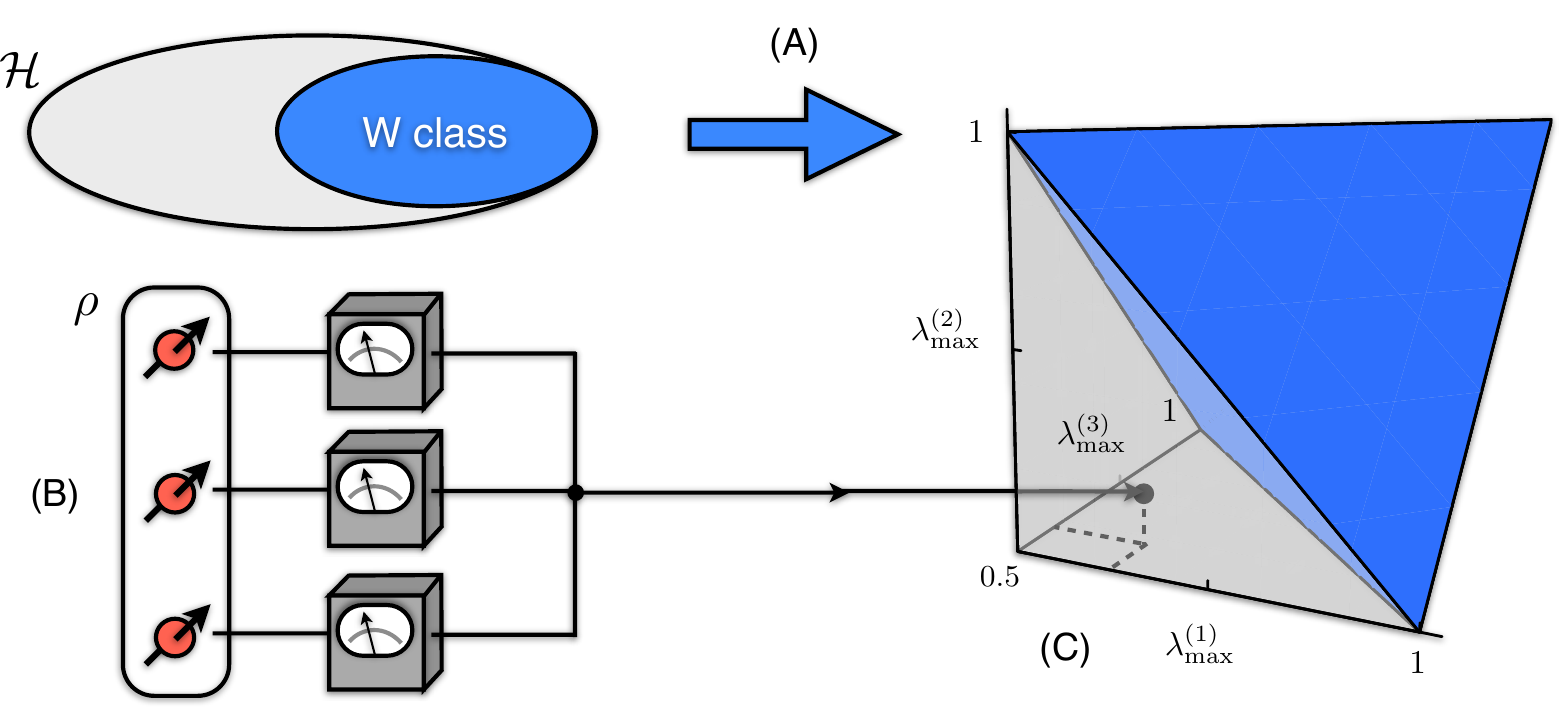}
\caption{{\it Entanglement polytopes as witnesses (illustrated for three qubits).}
(A) An entanglement polytope contains all possible local eigenvalues
of states in the entanglement class (the W class and its polytope
are shown in blue).
(B) For a sufficiently pure quantum state $\rho$, local tomography is performed to determine its local
eigenvalues.
(C) The indicated eigenvalues are not compatible with the W class,
hence $\rho$ must have GHZ-type entanglement.}
\end{figure}

Here we make the crucial observation that these local eigenvalues alone can already give considerable information about the entanglement of the global state, provided that it is pure. To make this precise, we consider the set of all eigenvalue vectors $\vec\lambda$ of the states in the closure of a given entanglement class. Surprisingly, this set also forms a convex polytope (i.e., it is the convex hull of finitely many such vectors), and we call it the entanglement polytope of the class.
Entanglement polytopes immediately lead to a local criterion for
witnessing global multi-particle entanglement: If the collection of
eigenvalues $\vec\lambda = (\vec\lambda^{(1)}, \ldots, \vec\lambda^{(N)})$ of the
single-particle reduced density matrices of a pure quantum state ${\lvert \psi \rangle}$
does not lie in an entanglement polytope $\Delta_{\mathcal C}$, then
the given state cannot belong to the corresponding entanglement class
$\mathcal C$ (Fig.~1). Mathematically,
\begin{equation}
  \label{criterion}
  \vec\lambda = (\vec\lambda^{(1)}, \ldots, \vec\lambda^{(N)}) \notin \Delta_\mathcal C
  \,\Longrightarrow\,
  {\lvert \psi \rangle} \notin \mathcal C.
\end{equation}
Phrased differently, the criterion allows us to witness the presence
of a highly entangled state by showing that its local eigenvalues are
incompatible with all less-entangled classes.
%
%
Strikingly, there are always only finitely many entanglement polytopes, and they naturally form a hierarchy: if a state in the class
$\mathcal C$ can be approximated arbitrarily well by states from $\mathcal D$ then
$\Delta_{\mathcal C} \subseteq \Delta_{\mathcal D}$.
This reflects geometrically the fact that states in the second class are more powerful for quantum information processing.

In order to compute $\Delta_{\mathcal C}$, and to see that it is
indeed a convex polytope, we use tools from algebraic geometry
and group representation theory, presented in detail in \cite{suppltext}.
We use the characterization of SLOCC operations as invertible local operators $A_1 \otimes \ldots \otimes A_N$ \cite{durvidalcirac00}, which act on the class $\mathcal{C}$, and therefore also on the set of polynomial functions on $\mathcal{C}$. The irreducible subspaces of
this action correspond to covariants, i.e.\ vector-valued
polynomial functions transforming in a well-defined way. By the
representation theory of Lie groups, each covariant is labeled by a
highest weight $\vec\mu = (\vec\mu^{(1)}, \ldots,
\vec\mu^{(N)})$, where the $\vec\mu^{(k)}$ are vectors of natural numbers,
whose entries are ordered decreasingly and sum to the degree $n$ of the polynomial.
Thus, any normalized highest weight $\vec\mu / n$ formally looks like an eigenvalue vector $\vec\lambda$.
This formal similarity corresponds to a factual correspondence:
$\Delta_{\mathcal{C}}$ is essentially given by those $\vec\mu / n$'s
whose associated covariants do not vanish on $\mathcal{C}$ \cite{brion87}.
The statement that $\Delta_{\mathcal C}$ is a convex polytope then
follows from the fact that the covariants form a finitely-generated
algebra \cite{brion87}.
We explain how to algorithmically compute a finite set of generators by
using computational invariant theory \cite{derksenkemper02},
motivated in part by \cite{dorankirwan07}.
The polytope can then be obtained as the
convex hull of the normalized highest weights $\vec\mu / n$
of those generators which are non-zero on the class $\mathcal C$. 

In the following we illustrate our method with a number of paradigmatic examples:
For qubit systems, each single-particle reduced density matrix $\rho^{(k)}$  has two eigenvalues, which are non-negative and sum to one; hence its spectrum is completely characterized by the maximal eigenvalue $\lambda^{(k)}_{\max}$, which can take values in the interval $[0.5,1]$.
In the case of three qubits, we may therefore regard the entanglement polytopes as subsets of three-dimensional space.
There are two full-dimensional polytopes \cite{han_compatible_2004}: one
for the W class (the upper pyramid in Fig.~1)
and the other for the GHZ class (the entire polytope, i.e., the
union of both pyramids). The tip of the upper pyramid
constitutes a polytope by itself, indicating a product state.
Three further one-dimensional polytopes are given by the edges
emanating from this vertex. They correspond to the three possibilities
of embedding a Bell state 
into three systems.
Thus, eigenvalues in the interior of the polytope are compatible only
with W and GHZ classes, i.e., genuine three-partite entanglement.
Likewise, if the
eigenvalues lie in the lower pyramid 
then by Eq.~\ref{criterion} the state cannot be contained in the closure of the $W$
class---we have witnessed GHZ-type entanglement.

\begin{figure}
\includegraphics[width=\linewidth]{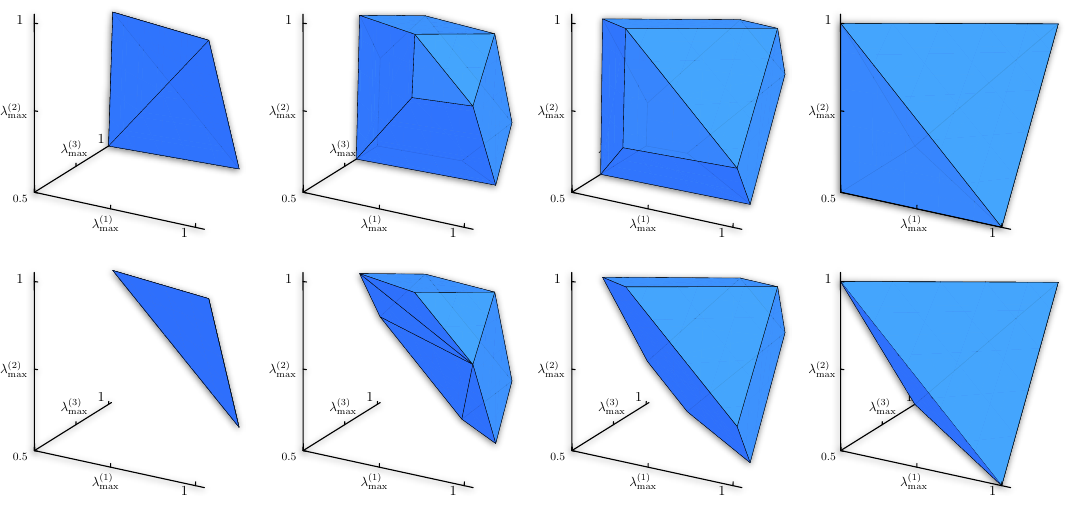}
\caption{{\it Cross-sections of two entanglement polytopes for four qubits.}
Each row shows cross-sections of a four-dim\-en\-sio\-nal entanglement polytope for four values of $\lambda^{(4)}_{\max}$: $0.5$, $0.66$, $0.83$ and $1$.
The first row corresponds to the entanglement class $L_{a_4}$ for $a=0$ in the
characterization of \cite{verstraetedehaenedemooretal02} and
the second row to the four-qubit W class---one can clearly
identify the ``upper-pyramid
form'' explained in the text.
Several properties that had previously been computed algebraically \cite{verstraetedehaenedemooretal02}
can be read off directly: E.g.,
the final column corresponds to the situation when the fourth qubit
has been projected onto a pure state;
as apparent from the polytopes,
the state of the remaining three sites is generically of GHZ type in the first row,
and of W type in the second row. See (26)
for an interactive visualization of all four-qubit entanglement polytopes.}
\end{figure}

In systems of 4 qubits, there exist 9 infinite families of entanglement
classes, each described by up to four complex parameters
\cite{verstraetedehaenedemooretal02} that are not directly
accessible; hence, a complete classification is too detailed
to be practical. In contrast, the polytope method strikes an attractive
balance between coarse-graining and preserving structure
(Fig.~2):
Up to permutations, there are 12 entanglement polytopes, 7 of which
are full-dimensional and correspond to distinct types of genuine
four-partite entanglement. One example is the 4-qubit W class: in complete analogy to the previous case,
its polytope is an ``upper pyramid'' of eigenvalues that fulfill
$
  \lambda^{(1)}_{\max} + \lambda^{(2)}_{\max} +
  \lambda^{(3)}_{\max} + \lambda^{(4)}_{\max} \geq 3
$.

Quan\-tum states which are genuinely multipartite entangled are of
particular interest \cite{guehne_multipartite_2005}. These are the states
which do not factorize with respect to any partition of the qubits into two sets.
We show for arbitrary qubit systems that the entanglement polytopes of the biseparable states (i.e., the states that do factorize) do not account for all possible eigenvalues. Therefore, the presence of genuine multipartite entanglement in a pure quantum state can be certified by checking that the local eigenvalues do not lie in any biseparable polytope (Fig.~3).

\begin{figure}
\includegraphics[width=0.5\linewidth]{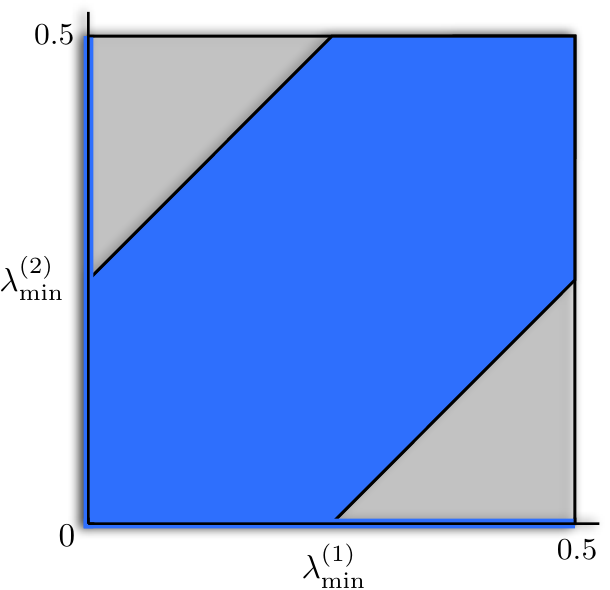}
\caption{{\it Witnessing genuine multipartite entanglement.}
The figure displays the two-dimensional cross-section through the six-qubit eigenvalue polytope where we fix $\lambda^{(3)}_{\min} = \ldots = \lambda^{(6)}_{\min} = 0.125$.
If the local eigenvalues do not belong to any biseparable entanglement polytope (blue region) then any corresponding pure state necessarily contains genuine six-qubit entanglement.}
\end{figure}

Entanglement polytopes can also be constructed for quantum systems composed of bosons or fermions. 
Since the individual particles are indistinguishable, the reduced density matrices $\rho^{(k)}$ and their eigenvalues $\vec\lambda^{(k)}$ all coincide.
In the case of qubits, the entanglement polytopes are therefore intervals describing the possible values of the maximal local eigenvalue for states in the class.
For bosons, the right endpoint of the interval is always equal to one, corresponding to a coherent state, while the left endpoint $\gamma_{\mathcal C}$ corresponds to the most entangled states in the class.
These are the symmetric Dicke states with mean spin $m \geq 0$ per particle, for which $\gamma_C = 0.5 + m$, and they include the GHZ state for $\gamma_C = 0.5$.
In \cite{suppltext} we also describe the entanglement polytopes for the Borland--Dennis system, which is composed of three fermions with local rank six \cite{borlanddennis72}.

We can also obtain quantitative information about the multi-particle entanglement of a quantum state, e.g., by witnessing genuine $k$-partite entanglement \cite{guehne_multipartite_2005} using a generalization of the argument sketched above.
Furthermore, we may consider the linear entropy of entanglement
$E(\psi) = 1 - \frac{1}{N}\sum_{j=1}^N \mathrm{tr}(\rho^{(j)})^2$
used, e.g., in metrology
\cite{furuyanemespellegrino98}. 
Entanglement polytopes allow us to bound the maximal linear entropy of entanglement
distillable 
by SLOCC operations. 
Since $E(\psi)$ corresponds to the Euclidean length of the vector of local eigenvalues, 
shorter vectors imply more entanglement.  In particular, quantum
states of maximal entropy of entanglement in a class $\mathcal C$ map
to the point of minimal distance to the origin in the entanglement
polytope $\Delta_{\mathcal C}$. Therefore, if the local eigenvalues of
a given state lie only in polytopes with small distance to the origin,
a high amount of entanglement can be distilled.
In \cite{suppltext} we turn this observation into a quantitative statement
and describe a distillation algorithm.


%
%

Quantum states prepared in the laboratory are always subject to noise and hence never perfectly pure.
Our method for witnessing entanglement using Eq.~\ref{criterion} can be adapted to this situation as long as the noise is not too large.
To make this statement precise, we assume that a lower bound $1-\varepsilon$ on the purity $\mathrm{tr}\,\rho^2$ of a quantum state $\rho$ is available. This implies that $\rho$ has fidelity $\langle \psi | \rho | \psi \rangle \geq 1-\varepsilon$ with a pure state $\lvert\psi\rangle$ whose local eigenvalues deviate from the measured ones by no more than a small amount $\delta(\varepsilon)$.
In the case of $N$ qubits one has that $\delta(\varepsilon) \approx N \varepsilon / 2$ for small impurities.
Therefore, as long as the distance of the measured eigenvalues $\vec\lambda$ to the entanglement polytope $\Delta_{\mathcal C}$ is at least $\delta(\varepsilon)$, the experimentally prepared state $\rho$ has high fidelity with a pure state that is more entangled than the class $\mathcal C$.
These ideas can be further extended to show that $\rho$ itself cannot be written as a convex combination of quantum states in a given entanglement class, as we describe in \cite{suppltext}.
%
Unlike the local eigenvalues, the purity $\mathrm{tr}\,\rho^2$ cannot be determined by single-particle tomography alone.
However, it is in general not necessary to perform full tomography of the global state in order to estimate the purity \cite{buhrmanclevewatrousetal01}.
Whereas it is an experimental challenge to achieve the levels of purity necessary for the application of our method, we believe that they are in the reach of current technology \cite{rauschenbeutelnoguesosnaghietal00,pandaniellgasparonietal01,monzschindlerbarreiroetal11}.

\nocite{polytopes12}

\bibliography{slocc}

\bigskip\noindent{\bf Acknowledgements:} 
We thank A.~Botero, F.~Brand\~ao,
P.~B\"urgisser, C.~Ikenmeyer, F.~Kirwan, F.~Mintert, G.~Mitchison, A.~Osterloh, and P.~Vrana for
valuable discussions.
This work is supported by
the German Science Foundation (grant CH 843/2--1),
the Swiss National Science Foundation (grants PP00P2\_128455, 20CH21\_138799 (CHIST-ERA project CQC) and 200021\_138071),
the Swiss National Center of Competence in Research `QSIT',
and the Excellence Initiative of the German Federal and State Governments (grant ZUK 43).
After completion of this work, we have learned about independent related work by Sawicki, Oszmaniec and Ku\'{s}.

\bigskip

\noindent {\bf Supplementary Materials:}\\
\href{http://www.sciencemag.org/content/340/6137/1205/suppl/DC1}{www.sciencemag.org/content/340/6137/1205/suppl/DC1}\\
Supplementary Text\\
Figs.~S1--S4\\
Tables~S1--S3\\
References (27--83)

\cleardoublepage
\onecolumngrid
\appendix*
\setcounter{page}{1}
\setcounter{equation}{0}
\setcounter{figure}{0}
\renewcommand{\thetable}{S\arabic{table}}
\renewcommand{\thefigure}{S\arabic{figure}}
\renewcommand{\theequation}{S\arabic{equation}}

\section*{Supplementary Text}

\subsection{Introduction}

In this supplement, we will give rigorous proofs of the main properties of entanglement polytopes (\autoref{entanglement polytopes}) and describe an algorithmic method for their computation, which we illustrate with several worked examples (\autoref{examples}).
Moreover, we elaborate on the properties of the linear entropy of entanglement and derive the gradient flow procedure for entanglement distillation that has been sketched in the main body of the article (\autoref{entropy of entanglement}). Our main technical tools are Brion's invariant-theoretic description of moment polytopes \cite{brion87} seen through the lens of non-reductive group actions \cite{dorankirwan07}, and Kirwan's analysis of the equivariant Morse gradient flow for the norm square of a moment map \cite{kirwan84}.

For clarity of exposition, we will describe our results for a system composed of $N$ distinguishable subsystems with $d_k$ degrees of freedom, respectively, and Hilbert space
$$\mathcal H = \CC^{d_1} \otimes \ldots \otimes \CC^{d_N}.$$
In the context of multi-particle entanglement, we will think of each of the $N$ subsystems as corresponding to an individual particle. However, the subsystems can be of more general nature and, e.g., describe different degrees of freedom such as position and spin.
All results can be adapted to systems composed of bosons or fermions by replacing $\mathcal H$ with the (anti)symmetric subspace (\autoref{examples}).
We shall denote by
$$\PP(\mathcal H) = \{ \rho = \ketbra \psi \psi : \braket{\psi | \psi} = 1 \}$$
the projective space of \emph{pure states}; the expectation value of an observable $O$ in a pure state $\rho$ is given by $\tr(\rho \, O) = \braket{\psi | O | \psi}$.
The effective state of the $k$-th particle is described by the \emph{(one-body) reduced density matrix} $\rho^{(k)}$, which by definition reproduces the expectation values of all one-body observables $O^{(k)}$,
\begin{equation}
  \label{RDM}
  \tr( \rho^{(k)} O^{(k)})
  = \tr(\rho \, ( \id^{\otimes (k-1)} \otimes O^{(k)} \otimes \id^{\otimes(N-k)} ) )
  = \braket{\psi | \id^{\otimes (k-1)} \otimes O^{(k)} \otimes \id^{\otimes(N-k)} | \psi}
\end{equation}
The one-body reduced density matrices $\rho^{(k)}$ are in general \emph{mixed states}, i.e., convex mixtures of pure states. Formally, they are positive semidefinite Hermitian operators of trace one. By the spectral theorem, each $\rho^{(k)}$ can be diagonalized, and the \emph{local eigenvalues} $\vec\lambda^{(k)} = \vec\lambda(\rho^{(k)})$ are vectors of non-negative numbers summing to one (which we assume by convention to be weakly decreasing).

\subsection{Multi-Particle Entanglement and Stochastic Local Operations and Classical Communication (SLOCC)}
\label{slocc}

A pure quantum state $\rho = \ketbra \psi \psi$ of a multi-particle system is called \emph{entangled} if it cannot be written as a tensor product \cite{nielsenchuang04}
\begin{equation*}
  \ket \psi \neq \ket{\psi^{(1)}} \otimes \ldots \otimes \ket{\psi^{(N)}}.
\end{equation*}
It is easy to see that $\rho$ is unentangled, or \emph{separable}, if and only if all its one-body reduced density matrices $\rho^{(k)}$ are pure states, that is, if and only if $\vec\lambda^{(k)} = (1,0,\ldots,0)$ for $k=1, \ldots, N$.

In order to classify the entanglement present in a given multipartite quantum state $\rho$, it is useful to compare its capability for quantum information processing tasks with that of other quantum states. Specifically, we shall think of $\rho$ to be at least as entangled as any other quantum state $\rho'$ that can be produced from a single copy of $\rho$ by performing a sequence of \emph{stochastic local operations} (i.e., local operations which succeed with some positive probability, e.g., by post-selecting on a certain measurement outcome) \emph{and classical communication (SLOCC)}
\cite{bennettpopescurohrlichetal00,durvidalcirac00}. If, conversely, $\rho$ can also be produced from $\rho'$ then we can think of the two states as possessing the same kind of multi-particle entanglement. In this way the set of quantum states is partitioned into equivalence classes.
For pure states $\rho = \ketbra \psi \psi$ and $\rho' = \ketbra {\psi'} {\psi'}$, it has been shown that they are equivalent under SLOCC if and only if there exist invertible operators $A^{(k)}$ such that
\begin{equation*}
  \ket{\psi'} = ( A^{(1)} \otimes \ldots \otimes A^{(N)} ) \ket\psi.
\end{equation*}
Indeed, the operators $A^{(k)}$ can be defined by following a successful branch of a conversion protocol. Conversely, given operators $A^{(k)}$ it suffices to perform successive local POVM measurements with Kraus operators
\begin{equation}
  \label{SL POVM}
  S^{(k)} = \sqrt{p^{(k)}} A^{(k)}, ~ F^{(k)} = \sqrt{\id - p^{(k)} (A^{(k)})^\dagger A^{(k)}},
\end{equation}
where $p^{(k)}$ is a suitable normalization constant. If all measurements succeed (outcomes $S^{(1)}$, \ldots, $S^{(k)}$) then $\rho'$ has been successfully distilled from $\rho$ \cite{durvidalcirac00}.
Mathematically, it is convenient to define an action of the Lie group $G = \SL(d_1) \times \ldots \times \SL(d_N)$ of invertible local operators of determinant one on the projective space $\PP(\mathcal H)$ by
\begin{equation}
  \label{SL action}
  g \cdot \ketbra \psi \psi := \frac { ( g^{(1)} \otimes \ldots \otimes g^{(N)} ) \ketbra \psi \psi ( g^{(1)} \otimes \ldots \otimes g^{(N)} )^\dagger } {\norm{ ( g^{(1)} \otimes \ldots \otimes g^{(N)} ) \ket \psi}^2},
\end{equation}
where $g = (g^{(1)}, \ldots, g^{(N)}) \in G$. Then the result of \cite{durvidalcirac00} states that two pure states $\rho$ and $\rho'$ are equivalent under SLOCC if and only if $G \cdot \rho = G \cdot \rho'$. In other words, the \emph{SLOCC entanglement class} containing $\rho$ is just the orbit $\mathcal C = G \cdot \rho$. Clearly, the unentangled states form a single SLOCC class.

The closure $\overline{\mathcal C} = \overline{G \cdot \rho}$ of an entanglement class contains in addition those quantum states which can be arbitrarily well approximated by a state in the class.
In this way, the closure of an entanglement class can be given a similar operational interpretation as the class itself. While the entanglement classes partition the set of multi-particle quantum states, their closures naturally form a hierarchy. This is because they are stable under SLOCC operations; indeed, it is immediate that $\rho' \in \overline{G \cdot \rho}$ implies $\overline{G \cdot \rho'} \subseteq \overline{G \cdot \rho}$.
In particular, every entanglement class contains in its closure the class of unentangled states.

In summary, stochastic local operations and classical communication provide a systematic way of studying multi-particle entanglement. However, it is immediate from the fact that the dimension of $G$ grows only linearly with $N$ that there is generically an infinite number of distinct SLOCC entanglement classes labeled by an exponential number of continuous parameters (cf.~\autoref{examples}). It is therefore necessary to coarsen the classification in a systematic way in order to arrive at a tractable way of witnessing multi-particle entanglement.

We note that an extraordinary amount of research has been devoted to
the task of classifying SLOCC entanglement and identifying it
experimentally. The field is far too large to allow for an exhaustive
bibliography. For reviews on the general theory, see
\cite{horodecki09,eisert_multiparticle_2008}; a review focussing on
detection is \cite{guhne_entanglement_2009}. Methods from algebraic geometry and classical
invariant theory have long been used to analyze entanglement classes,
see, e.g., \cite{
verstraetedehaenedemooretal02,
klyachko_coherent_2002,
briandluquethibon03,
verstraetedehaenedemoor03,
miyake03,
leifer_measuring_2004,
han_compatible_2004,
osterloh_constructing_2005,
osterlohsiewert06,
klyachko07,
dokovicosterloh09,
bastinkrinsmathonetetal09,
osterloh10,
gour_necessary_2011,
viehmanneltschkasiewert11,
eltschka_multipartite_2012}
and references therein.

\subsection{Entanglement Polytopes}
\label{entanglement polytopes}

\subsubsection{Definition}

The \emph{entanglement polytope} of an entanglement class $\mathcal C$ is by definition
\begin{equation*}
  \Delta_{\mathcal C} =
  \left\{
    (\vec\lambda^{(1)}, \ldots, \vec\lambda^{(N)})
    \,:\,
    \vec\lambda^{(k)} = \vec\lambda(\rho^{(k)}),\,
    \rho \in \overline{\mathcal C}
  \right\},
\end{equation*}
the set of tuples of local eigenvalues of the quantum states in the closure of the entanglement class.
We will show below that this set is in fact a convex polytope.
The set of entanglement polytopes forms a hierarchy which coarsens the hierarchy of the closures of entanglement classes described above. Namely,
\begin{equation*}
  \overline{\mathcal C} \subseteq \overline{\mathcal D}
  \,\Longrightarrow\,
  \Delta_{\mathcal C} \subseteq \Delta_{\mathcal D}.
\end{equation*}
In particular, if a pure quantum state $\rho = \ketbra \psi \psi$ is contained in an entanglement class $\mathcal C$ then its collection of local eigenvalues is contained in the corresponding entanglement polytope $\Delta_{\mathcal C}$. That is,
\begin{equation*}
  \rho \in \mathcal C
  \,\Longrightarrow\,
  (\vec\lambda^{(1)}, \ldots, \vec\lambda^{(N)}) \in \Delta_{\mathcal C}.
\end{equation*}
Equivalently, if the collection of local eigenvalues is not contained in the entanglement polytope then the quantum state is necessarily in a different entanglement class:
\begin{equation}
  \label{appendix criterion}
  (\vec\lambda^{(1)}, \ldots, \vec\lambda^{(N)}) \notin \Delta_{\mathcal C}
  \,\Longrightarrow\,
  \rho \notin \mathcal C
\end{equation}
This establishes our main criterion for witnessing multi-particle entanglement.

\subsubsection{Invariant-Theoretic Description}
\label{covariants}

In order to analyze the properties of entanglement polytopes, it is useful to introduce the map
\begin{equation*}
  \Phi \colon
  \rho = \ketbra \psi \psi \mapsto (\rho^{(1)}, \ldots, \rho^{(N)})
\end{equation*}
which assigns to a quantum state the collection of its one-body reduced density matrices.
Given a product of local unitaries $U = U^{(1)} \otimes \ldots \otimes U^{(N)}$, it follows from \eqref{RDM} that
\begin{equation}
  \label{RDM equivariance}
  \Phi( U \rho U^\dagger) =
  ( U^{(1)} \rho^{(1)} (U^{(1)})^\dagger, \ldots, U^{(N)} \rho^{(N)} (U^{(N)})^\dagger ).
\end{equation}
We can therefore jointly diagonalize the reduced density matrices by applying suitable local unitaries. On the other hand, we observe that the action $\rho \mapsto U \rho U^\dagger$ is simply the restriction of \eqref{SL action} to the subgroup of local unitaries $K = \SU(d_1) \times \ldots \times \SU(d_N)$ of the group $G = \SL(d_1) \times \ldots \times \SL(d_N)$ introduced in \autoref{slocc} (the denominator is equal to unity since any unitary is norm-preserving), so that the entanglement class is left unchanged.
We conclude that for every quantum state $\rho \in \mathcal C$ there exists a quantum state $\rho'$ in the same entanglement class whose reduced density matrices are diagonal in the computational basis. As we may identify diagonal density matrices with their collection of eigenvalues (for definiteness, we shall require the diagonal entries to be arranged in weakly decreasing order), each entanglement polytope can be written as an intersection
\begin{equation}
  \label{moment polytope}
  \Delta_{\mathcal C} = \Phi(\overline{\mathcal C}) \cap D_\downarrow,
\end{equation}
where we denote by $D_\downarrow$ the set of tuples of diagonal density matrices with weakly decreasing entries.

In mathematical terms, the map $\Phi$ can be identified with the canonical \emph{moment map} for the action of $K$ on the projective space $\PP(\mathcal H)$ \cite{guilleminsternberg84,kirwan84}. Indeed, by \eqref{RDM equivariance} it is equivariant with respect to the action of $K$ on tuples of reduced density matrices, which can be identified with the coadjoint action of $K$ on the dual of its Lie algebra, and its components are Hamiltonian functions with respect to the Fubini--Study symplectic form $\omega_{\text{FS}}$,
\begin{equation}
  \label{moment map equation}
  d\braket{\Phi, X}(\widehat{i Y}_{\rho})
  = i \tr([Y,\rho] X)
  = i \tr(\rho [X,Y])
  = \omega_{\text{FS}}(\widehat{i X}_{\rho}, \widehat{i Y}_{\rho}),
\end{equation}
where $X$ and $Y$ are Hermitian operators on $\mathcal H$ and where $\widehat Z_\rho = \restrict{\frac d {dt}}{t=0} e^{tZ} \cdot \rho = \restrict{\frac d {dt}}{t=0} e^{tZ} \ketbra \psi \psi e^{tZ^\dagger} /\, \norm{e^{tZ} \ket \psi}_2^2$ is the tangent vector generated by the infinitesimal action of an arbitrary operator $Z$; in particular, $\widehat{i X}_\rho = i[X,\rho]$ and $\widehat{X}_\rho = \{X,\rho\} - 2 \tr(\rho X) \rho$. Formally, the above equation should be restricted to local observables $X$ of trace zero, since only these correspond to elements in the Lie algebra of the group $K$.

It follows that \eqref{moment polytope} is just the definition of the \emph{moment polytope}, or \emph{Kirwan polytope}, for the subvariety $\overline{\mathcal C} \subseteq \PP(\mathcal H)$, with $D_\downarrow$ corresponding to a choice of positive Weyl chamber.
It is a celebrated result in mathematics that the moment polytope of a compact symplectic manifold is a convex polytope \cite{kirwan84b}.
For the singular case of $G$-stable irreducible projective subvarieties of projective spaces, which is a category of spaces that includes the closures of any entanglement class, this has been established by Brion \cite{brion87}.

Brion's proof of convexity relies on an alternative, invariant-theoretic description of the moment polytopes. In order to apply his results to the case of entanglement polytopes, we need some notions from representation theory and invariant theory. Recall that the finite-dimensional polynomial irreducible representations of the group $\GL(d)$ of invertible $d \times d$-matrices are labeled by integer vectors $\vec\mu \in \ZZ^d_{\geq 0}$ with weakly decreasing entries---their highest weight, or Young diagram \cite{fulton97}. The irreducible representations of $\SL(d)$ can be obtained by restriction, and they will be denoted by $V^d_{\vec\mu}$.
Finally, a \emph{covariant} of degree $n$ and weights $\vec\mu^{(1)}, \ldots, \vec\mu^{(N)}$ is by definition a $G$-equivariant map
\begin{equation*}
  \mathcal H = \CC^{d_1} \otimes \ldots \otimes \CC^{d_N} \longrightarrow V^{d_1}_{\vec\mu^{(1)}} \otimes \ldots \otimes V^{d_N}_{\vec\mu^{(N)}}
\end{equation*}
whose components are homogeneous polynomials of degree $n$. Note that the right-hand side is a $G$-representation in the obvious way. 
We shall require each $\vec\mu^{(k)}$ to satisfy $\sum_j \mu^{(k)}_j = n$; this can always be satisfied and does not impose an additional restriction. Moreover, we observe that even though covariants are defined on the Hilbert space $\mathcal H$, their non-vanishing can be well-defined on subsets of the projective space $\PP(\mathcal H)$ by taking any representative vector, since they are homogeneous functions; we will do so without further mention.
We can now specialize the main result of \cite{brion87} to our setting:

\begin{thm}
  \label{main thm}
  The entanglement polytope of an entanglement class $\mathcal C = G \cdot \rho$ is equal to
  \begin{align*}
    \Delta_{\mathcal C}
    = &\overline{\left\{
      \frac 1 n \left( \vec\mu^{(1)}, \ldots, \vec\mu^{(N)} \right) :
      \text{$\exists$ a covariant of degree $n$ and weights $\vec\mu^{(k)}$ which is non-zero at $\rho$}
    \right\}}
  \end{align*}
\end{thm}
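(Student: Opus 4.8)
The plan is to identify the right-hand side with Brion's invariant-theoretic formula for the moment polytope and then to rephrase that formula in terms of covariants. By \eqref{moment polytope}, $\Delta_{\mathcal C} = \Phi(\overline{\mathcal C}) \cap D_\downarrow$ is precisely the moment (Kirwan) polytope of the irreducible $G$-stable projective subvariety $\overline{\mathcal C} \subseteq \PP(\mathcal H)$ for the action of the maximal compact subgroup $K \subseteq G$, with $D_\downarrow$ the chosen positive Weyl chamber, so it suffices to give the invariant-theoretic description of this moment polytope. First I would pass to the affine cone $\widehat{\overline{\mathcal C}} \subseteq \mathcal H$ over $\overline{\mathcal C}$ --- the closure of $\CC^\times \cdot \{ (g^{(1)} \otimes \cdots \otimes g^{(N)})\ket\psi : g \in G \}$, where $\rho = \ketbra\psi\psi$ --- and form its homogeneous coordinate ring $R = \CC[\widehat{\overline{\mathcal C}}] = \bigoplus_{n \geq 0} R_n = \CC[\mathcal H]/\mathcal I$, the quotient of $\CC[\mathcal H] = \bigoplus_n \Sym^n(\mathcal H^*)$ by the ideal of functions vanishing on the cone. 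Because $\mathcal C = G\cdot\rho$ is a single orbit, $\overline{\mathcal C}$ and hence $\widehat{\overline{\mathcal C}}$ are irreducible and $G$-stable; thus $\mathcal I$ is a homogeneous $G$-stable prime ideal, $R$ is a finitely generated graded integral domain carrying a rational, grading-compatible linear $G$-action, and $\mathcal O(1)|_{\overline{\mathcal C}}$ is ample --- exactly the hypotheses of Brion's theorem \cite{brion87}.

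Next I would invoke Brion's theorem. Writing $V_{\vec\mu} := V^{d_1}_{\vec\mu^{(1)}} \otimes \cdots \otimes V^{d_N}_{\vec\mu^{(N)}}$ for the irreducible $G$-representations, indexed by tuples of Young diagrams $\vec\mu = (\vec\mu^{(1)}, \ldots, \vec\mu^{(N)})$ with each $\vec\mu^{(k)}$ normalized so that $\sum_j \mu^{(k)}_j = n$ (always possible without changing the $\SL(d_k)$-representation, and forced by the grading on $R$), decompose $R_n \cong \bigoplus_{\vec\mu} (V_{\vec\mu}^*)^{\oplus m_{n,\vec\mu}}$ into $G$-isotypic components. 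Brion's theorem identifies the moment polytope of $\overline{\mathcal C}$ intersected with the positive Weyl chamber --- that is, $\Delta_{\mathcal C}$ --- with $\overline{\{ \tfrac1n \vec\mu : m_{n,\vec\mu} > 0 \}}$; that this closure is a rational convex polytope is part of the assertion and reflects finite generation of $R$. (If $\overline{\mathcal C}$ fails to be projectively normal, $R_n$ may disagree with $H^0(\overline{\mathcal C},\mathcal O(n))$ in small degrees, but the two carry the same weights asymptotically, so the closure is unchanged; alternatively, Brion's result applies verbatim to the graded ring $R$.)

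It then remains to translate ``$m_{n,\vec\mu} > 0$'' into the existence of a covariant of degree $n$ and weights $\vec\mu^{(k)}$ non-vanishing at $\rho$. By definition, a covariant of degree $n$ and weights $\vec\mu$ is an element of $(\Sym^n(\mathcal H^*) \otimes V_{\vec\mu})^G$; restriction of polynomial functions to the cone is a $G$-equivariant surjection $\Sym^n(\mathcal H^*) \twoheadrightarrow R_n$, and since $G$ is reductive the functor $(-\otimes V_{\vec\mu})^G$ is exact, giving a surjection
$$ (\Sym^n(\mathcal H^*) \otimes V_{\vec\mu})^G \twoheadrightarrow (R_n \otimes V_{\vec\mu})^G \cong \Hom_G(V_{\vec\mu}^*, R_n), $$
whose target has dimension $m_{n,\vec\mu}$ by Schur's lemma. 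Hence $m_{n,\vec\mu} > 0$ iff some covariant of degree $n$ and weights $\vec\mu$ restricts to a nonzero function on $\widehat{\overline{\mathcal C}}$. Finally, a covariant $f$ obeys $f(g\cdot\psi) = g \cdot f(\psi)$ with $g$ acting invertibly on $V_{\vec\mu}$, so it vanishes at a representative vector $\psi$ of $\rho$ iff it vanishes on the whole orbit $G\cdot\psi$; as that orbit is dense in $\widehat{\overline{\mathcal C}}$ and $f$ is continuous, this is equivalent to $f$ vanishing identically on $\widehat{\overline{\mathcal C}}$. So ``nonzero on $\widehat{\overline{\mathcal C}}$'' may be replaced by ``nonzero at $\rho$'', and chaining the three equivalences yields the stated description of $\Delta_{\mathcal C}$.

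I expect the genuine mathematical substance --- convexity together with the invariant-theoretic characterization of the moment polytope of a possibly singular $G$-stable projective variety --- to be carried entirely by Brion's theorem, which we quote rather than reprove; the remaining work, and the only place where real care is needed, is bookkeeping: verifying the hypotheses of \cite{brion87} (all immediate because $\mathcal C$ is an orbit), keeping track of the duality $V_{\vec\mu} \leftrightarrow V_{\vec\mu}^*$ so that the weights occurring in $R$ correspond to the covariant weights and land in $D_\downarrow$ (the step most easily gotten backwards), fixing the normalization $\sum_j \mu^{(k)}_j = n$ that turns $\tfrac1n\vec\mu$ into a tuple of probability vectors, and invoking density of the orbit in $\overline{\mathcal C}$ to localize the non-vanishing condition to the single state $\rho$.
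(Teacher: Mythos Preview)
Your proposal is correct and follows essentially the same route as the paper: identify $\Delta_{\mathcal C}$ with the moment polytope of $\overline{\mathcal C}$, invoke Brion's description in terms of irreducible $G$-types occurring in the graded coordinate ring of the affine cone, and then use reductivity of $G$ together with density of the orbit to translate this into the existence of a covariant on all of $\mathcal H$ that is nonzero at $\rho$. The only cosmetic difference is that the paper carries out the lifting step via highest weight vectors and Weyl's unitary trick (choosing a $G$-invariant complement to the vanishing ideal), whereas you phrase the same semisimplicity argument as exactness of the functor $(- \otimes V_{\vec\mu})^G$; these are equivalent, and your packaging is arguably cleaner.
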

\begin{proof}
  By the above discussion and \eqref{moment polytope}, we have seen that $\Delta_{\mathcal C}$ is equal to the moment polytope of the $G$-orbit closure $\overline{\mathcal C}$. In \cite{brion87} it is shown that this moment polytope can be described as the closure of the set of points $\left( \vec\mu^{(1)}, \ldots, \vec\mu^{(N)} \right)/n$ for which there exists a non-zero $G$-equivariant morphism
  \begin{equation*}
    \varphi \colon \widetilde{\mathcal C} \rightarrow V^{d_1}_{\vec\mu^{(1)}} \otimes \ldots \otimes V^{d_N}_{\vec\mu^{(N)}}
  \end{equation*}
  which is homogeneous of degree $n$ (i.e., whose components are homogeneous polynomial functions of degree $n$). Here, we denote by $\widetilde{\mathcal C} = \{ \ket \psi : \ketbra \psi \psi \; / \braket{\psi | \psi} \in \overline{\mathcal C} \}$ the affine cone over $\overline{\mathcal C}$.
  Evidently, every covariant which is non-zero at $\rho$ restricts to a map on $\widetilde{\mathcal C}$ which does not vanish completely; this shows one inclusion.
  For the converse, we first observe that every non-zero map $\varphi$ as above is automatically non-zero at $\rho$, for it is $G$-equivariant and the affine cone over $\mathcal C = G \cdot \rho$ is an open subset of $\widetilde{\mathcal C}$. It remains to show that it can be extended to a $G$-equivariant map on all of $\mathcal H$. This can be seen as follows by using some basic algebraic geometry and representation theory \cite{hartshorne77,knapp02}:

  Since the map $\varphi$ is $G$-equivariant and its range an irreducible representation, it is completely equivalent to consider instead of $\varphi$ its component $\phi = \braket{\varphi, v}$ evaluated at a highest weight vector $v$ with respect to some maximal unipotent subgroup $U \subset G$. The function $\phi$ is a $U$-invariant homogeneous polynomial function of degree $n$, i.e., a $U$-invariant element of the degree-$n$ piece of the homogeneous coordinate ring
  \begin{equation*}
    \CC[\widetilde{\mathcal C}]_n = \CC[\mathcal H]_n / I_n,
  \end{equation*}
  where $\CC[\mathcal H]$ denotes the algebra of polynomial functions on $\mathcal H$,
  $\CC[\mathcal H]_n$ the subspace of homogeneous functions of degree $n$, and
  $I_n = \{ p \in \CC[\mathcal H]_n : p(\widetilde{\mathcal C}) = 0 \}$ the corresponding part of the vanishing ideal of $\widetilde{\mathcal C}$. By averaging over the maximal compact subgroup $K \subseteq G$, we can find an orthogonal complement $I_n^\perp$ to $I_n$ that is $G$-invariant (this is oftentimes called Weyl's unitary trick). Therefore, the natural map
  \begin{equation*}
    I_n^\perp \subseteq \CC[\mathcal H]_n \rightarrow \CC[\widetilde{\mathcal C}]_n
  \end{equation*}
  is a $G$-isomorphism, and it can be used to find a $U$-invariant extension of $\phi$ to all of $\mathcal H$. By reversing the above procedure, we obtain a corresponding covariant which $G$-equivariantly extends $\varphi$ to all of $\mathcal H$.
\end{proof}

\subsubsection{Properties and Computation}
\label{properties and computation}

The set of covariants of fixed degree and weight form a vector space. Moreover, given any two covariants $P$, $Q$ of degree $n$,$m$ and weights $\vec\mu^{(k)}$, $\vec\nu^{(k)}$, we can form their product by defining
\begin{equation}
  \label{multiplication}
  (P \circ Q)(\ket \psi) = (\pi^{(1)} \otimes \ldots \otimes \pi^{(N)}) \left( P(\ket \psi) \otimes Q(\ket \psi) \right).
\end{equation}
where $\pi^{(k)} \colon V^{d_k}_{\vec\mu^{(k)}} \otimes V^{d_k}_{\vec\nu^{(k)}} \rightarrow V^{d_k}_{\vec\mu^{(k)}+\vec\nu^{(k)}}$ denotes the projection onto the unique irreducible representation in the tensor product of two irreducible representations whose highest weight is the sum of the highest weights of the factors. The function $P \circ Q$ is a covariant of degree $n+m$ and weights $\vec\mu^{(k)} + \vec\nu^{(k)}$. Note that the point corresponding to $P \circ Q$,
\begin{equation}
  \label{convex}
  \frac {1} {n+m} \left( \vec\mu^{(1)} + \vec\nu^{(1)}, \ldots, \vec\mu^{(N)} + \vec\nu^{(N)} \right)
  = \frac {n} {n+m} \left( \frac 1 n \left( \vec\mu^{(1)}, \ldots, \vec\mu^{(N)} \right) \right)
  + \frac {m} {n+m} \left( \frac 1 m \left( \vec\nu^{(1)}, \ldots, \vec\nu^{(N)} \right) \right),
\end{equation}
is a convex combination of the points corresponding to the covariants $P$ and $Q$. It follows that the entanglement polytopes are convex (replace $P$ and $Q$ by their respective powers).
If we allow for formal linear combinations of covariants of different degree or weight, we thus obtain a graded algebra, and it is well-known that this algebra is \emph{finitely generated}, i.e., there exist finitely many covariants $P_1, \ldots, P_m$ such that any other covariant can be expressed as a linear combination of monomials in the $P_j$.
In the literature, this is typically phrased as the well-known statement that the algebra $\CC[\mathcal H]^U$ of $U$-invariant polynomials is finitely generated \cite{kraft85,grosshans97}; \eqref{multiplication} then corresponds to the ordinary multiplication of functions.
As noted in \cite{brion87}, finite generation implies that moment polytopes are compact convex polytopes. Using \autoref{main thm}, we can slightly sharpen this result for the case of entanglement polytopes:

\begin{thm}
  \label{main cor}
  The entanglement polytope of an entanglement class $\mathcal C = G \cdot \rho$ is given by the convex hull
  \begin{align*}
    \Delta_{\mathcal C}
    = &\conv \left\{
      \frac 1 {n_j} \left( \vec\mu^{(1)}_j, \ldots, \vec\mu^{(N)}_j \right) :
      P_j(\rho) \neq 0
    \right\},
  \end{align*}
  where we denote by $n_j$ the degree and by $\vec\mu^{(k)}_j$ the weights of a generator $P_j$ ($j=1,\ldots,m$).
  In particular, $\Delta_{\mathcal C}$ is a compact convex polytope.
\end{thm}
\begin{proof}
  Consider a monomial $P_{j_1} \circ \cdots \circ P_{j_l}$ which does not vanish on $\rho$.
  Its degree is $n = \sum_i n_{j_i}$ and its weights are given by $\vec\mu^{(k)} = \sum_i \vec\mu^{(k)}_{j_i}$.
  Since necessarily $P_{j_i}(\rho) \neq 0$ for all $i=1,\ldots,l$, the corresponding point
  \begin{equation*}
    \frac 1 n \left( \vec\mu^{(1)}_j, \ldots, \vec\mu^{(N)}_j \right)
    = \sum_i \frac {n_{j_i}} n \left( \frac 1 {n_{j_i}} \left( \vec\mu^{(1)}_{j_i}, \ldots, \vec\mu^{(N)}_{j_i} \right) \right)
  \end{equation*}
  is contained in the convex hull displayed in the statement of the corollary.

  If $P$ is an arbitrary covariant of degree $n$ and weights $\vec\mu^{(k)}$ which does not vanish on $\rho$, we can write it as a linear combination of monomials. If the linear combination is chosen minimally then every monomial has the same degree and weights as $P$. Since at least one of the monomials must not vanish on $\rho$, the claim follows from what we have proved above.
\end{proof}

By virtue of \autoref{main cor}, the \emph{computation} of entanglement polytopes is a finite problem that can be completely algorithmized. Indeed, by using the relation \cite[\S{}4.2]{dolgachev03} (cf.~\cite{dorankirwan07})
\begin{equation*}
  \CC[\mathcal H]^U \cong
  \left( \CC[\mathcal H] \otimes \CC[\overline{G/U}] \right)^G \cong
  \CC[\mathcal H \times \overline{G/U}]^G
\end{equation*}
the problem of computing a set of generating covariants $P_1, \ldots, P_m$ is transformed into a problem of computing invariants for a complex reductive group, for which there exist algorithms in computational invariant theory \cite{derksenkemper02}. 
Once a set of generators has been found, \autoref{main cor} can be used to compute the entanglement polytope both for specific states as well as for families of states. We use this method to compute all examples in \autoref{examples}.

Finite generation also implies other desirable properties.
It is clear that there are only \emph{finitely many} entanglement polytopes, since by \autoref{main cor} any entanglement polytopes is the convex hull of some subset of the finite set of points
\begin{equation*}
  \mathcal P = \{ ( \vec\mu^{(1)}_j, \ldots, \vec\mu^{(N)}_j ) / n_j : j=1,\ldots,m\}
\end{equation*}
Moreover, as similarly observed in \cite{brion87}, the set of quantum states for which all generators are non-zero,
\begin{equation*}
  \bigcap_{j=1}^m \{ \rho \in \PP(\mathcal H) : P_j(\rho) \neq 0 \},
\end{equation*}
is a finite intersection of Zariski-open sets, hence itself Zariski-open. It follows that the entanglement polytope of a generic quantum state is maximal, i.e., equal to the convex hull of the finite set $\mathcal P$.

Let us briefly digress to discuss this generic entanglement polytope, which we shall denote by $\Delta$. Clearly,
\begin{align*}
  \Delta &= \conv \left\{
    \frac 1 {n_j} \left( \vec\mu^{(1)}_j, \ldots, \vec\mu^{(N)}_j \right)
  \right\} \\
  &= \overline{\left\{
    \frac 1 n \left( \vec\mu^{(1)}, \ldots, \vec\mu^{(N)} \right) :
    \text{$\exists$ a covariant of degree $n$ and weights $\vec\mu^{(k)}$}
  \right\}} \\
  &= \Phi(\PP(\mathcal H)) \cap D_\downarrow \\
  &= \left\{
    (\vec\lambda^{(1)}, \ldots, \vec\lambda^{(N)})
    \,:\,
    \vec\lambda^{(k)} = \vec\lambda(\rho^{(k)}),\,
    \rho \in \PP(\mathcal H)
  \right\},
\end{align*}
the set of possible local eigenvalues of an arbitrary pure quantum state (not restricted to any particular entanglement class).
The problem of computing this polytope is known as the \emph{one-body
quantum marginal problem} in quantum information theory and as the
one-body $N$-representability problem in quantum chemistry
\cite{colemanyukalov00}. Its convexity has been noted in
\cite{christandlmitchison06,daftuarhayden04,klyachko04}, and it has
been solved by combining the invariant-theoretic characterization of
$\Delta$ with some Schubert calculus and geometric invariant theory
\cite{klyachko04,daftuarhayden04,klyachko06}; see also
\cite{christandldorankousidisetal12} for a different approach relying
solely on symplectic geometry. In the case of $N$ qubits, a complete
solution has been obtained in \cite{higuchisudberyszulc03}.
More generally, the way in which properties of the global state manifest in local correlations has also been studied in the literature \cite{lindenpopescuwootters02,wuerflingeretal12}.

As remarked above, convexity and invariant-theoretic descriptions can be established for arbitrary $G$-stable irreducible subvarieties $X \subseteq \PP(\mathcal H)$, not only for orbit closures \cite{brion87}. For instance, we may choose $X$ to be a \emph{Segr\'{e} variety}, i.e., the set of states which are biseparable with respect to a fixed bipartition of the $N$ subsystems (see \autoref{sec:genuine} below for an example).

\subsubsection{Experimental Noise}
\label{noise}

A quantum state prepared in the laboratory will always be a mixed
state $\rho$ and it is a priori unclear what statements can be
inferred about its entanglement from its local eigenvalues. Here, we
give two slightly different ways for leveraging the results discussed
for pure states to the more realistic scenario of small noise.

For both approaches, we will assume that a lower bound $p$ on the purity $\tr\rho^2$ is
available.  One natural way of obtaining such an estimate is the
well-known swap test \cite{buhrmanclevewatrousetal01} which directly
estimates $\tr\rho^2$ using two-particle measurements on two copies of
$\rho$.
We sketch an alternative procedure which may be simpler to implement
in some platforms. It is rigorous up to an assumption on the
prevailing noise mechanism: namely that it does not increase purity.
This does hold, e.g., for dephasing and depolarizing noise---two
models applicable to the majority of experiments.  Now suppose that
$\rho$ has been prepared by acting on an initial product state with a
quantum operation $\Lambda$ approximating an entangling unitary gate
$U$ (e.g., a spin squeezing operation).  Let $\Lambda'$ be a channel
approximating $U^{-1}$, and $\rho' = \Lambda'(\rho)$. Under said
assumption, we have that $\tr \rho'^2 \leq \tr \rho^2$, i.e., the
purity has decreased under the noisy ``disentangling operation''
$\Lambda'$. But $\rho'$ is still approximately a product, so that the
lower bound \cite{audenaert07}
\begin{equation*}
  \tr \rho^2
  \geq \tr \rho'^2
  \geq \sum_{k=1}^N \norm{\vec\lambda^{(k)}(\rho')}_2^2 - (N-1) = p
\end{equation*}
for the global purity in terms of the local eigenvalue spectra is
likely not to be too lose (it is tight for product states).
The following bounds will produce non-vacuous results only if the
purity $p$ exceeds $1/2$, which will from now on be assumed.

\bigskip

The first way of dealing with noise is to realize that there is a pure
state $\ket\psi$ with fidelity $\braket{\psi | \rho | \psi} \geq p$ whose
vector of local eigenvalues differs from the vector of local
eigenvalues of $\rho$ by at most $N \left(1-\sqrt{2p-1}\right)$.

To see this, expand the mixed state $\rho$ in its eigenbasis
$\rho=\sum_i r_i \ketbra {\psi_i} {\psi_i}$ with eigenvalues ordered non-increasingly, $r_i \geq r_{i+1}$.
Our assumption $p > 1/2$ on the purity implies immediately that the maximal eigenvalue is also lower-bounded by $1/2$,
\begin{equation}
  \label{r1vspurity}
	r_1 = r_1 \sum_i r_i \geq \sum_i r_i^2 = \tr \rho^2 = p > \frac 1 2.
\end{equation}
On the other hand, using that $r_j \leq \sum_{i>1} r_i = 1 - r_1$ for all $j > 1$, we find that
\begin{equation*}
  p
  = \tr \rho^2
  = r_1^2 + \sum_{i>1} r_i^2
  \leq r_1^2 + (1 - r_1)^2
  = 1 - 2 r_1 (1 - r_1).
\end{equation*}
We solve the quadratic relation and obtain two possible solutions,
\begin{equation*}
  r_1 \leq \frac 1 2 \left( 1 - \sqrt{2p - 1} \right),
  \qquad
  r_1 \geq \frac 1 2 \left( 1 + \sqrt{2p - 1} \right).
\end{equation*}
Only the right-hand side solution is compatible with $r_1 > 1/2$.
From now on we will employ the convention that
$\norm{-}_p$, when applied to matrices, denotes the (Schatten)
$p$-norm (in particular: $\norm{-}_1$ is the trace norm).
Set $\ket\psi := \ket{\psi_1}$. Then clearly, $\braket{\psi | \rho | \psi} = r_1 \geq p$ by \eqref{r1vspurity}.
On the other hand, by using a version of Weyl's perturbation theorem for
the $1$-norm \cite[(11.46)]{nielsenchuang04}, the vector of local eigenvalues only changes by
\begin{equation}
  \label{weyl perturbation}
  \sum_{k=1}^N \norm{\vec\lambda^{(k)}(\rho) - \vec\lambda^{(k)}(\ket\psi)}_1
  \leq N \norm{\rho - \ketbra \psi \psi}_1
	=
	2 N (1 - r_1)
	\leq
	N \left(1-\sqrt{2p-1}\right).
\end{equation}
For the case of $N$ qubits, a further improvement can be made. Here,
we are only concerned with the largest eigenvalue in each system. Due
to normalization $\lambda^{(k)}_{\max}+\lambda^{(k)}_{\min}=1$, every
deviation in the maximum eigenvalue has to be accompanied by
a deviation of equal magnitude of the minimum eigenvalue.
Thus, the 1-norm difference of the maximum eigenvalues is exactly half
the quantity estimated above:
\begin{equation}\label{eqn:qubit perturbation}
	\sum_{k}
	\abs{\lambda^{(k)}_{\max}(\rho) - \lambda^{(k)}_{\max}(\ket\psi)}
	=
	\sum_{k}
	\frac12
	\norm{\vec\lambda^{(k)}(\rho) - \vec\lambda^{(k)}(\ket\psi)}_1
	\leq
	\frac N2\left(1-\sqrt{2p -1}\right).
\end{equation}
For small noise, $p = 1 - \varepsilon \approx 1$, the right-hand side is given by $N\varepsilon/2$ to first order in $\varepsilon.$

We now illustrate this approach with a numerical example. Suppose that $\rho$ is an experimentally prepared quantum state of four qubits with purity $p = 0.9$. Then by the above there exists a pure state $\ket\psi$ with fidelity $\braket{\psi | \rho | \psi} \geq 0.9$ for which \eqref{eqn:qubit perturbation} reads
\begin{equation}
  \label{concrete bound}
  \sum_{k} \abs{\lambda^{(k)}_{\max}(\rho) - \lambda^{(k)}_{\max}(\ket\psi)} \leq \frac 4 2 \left( 1 - \sqrt{2 p - 1} \right) \approx 0.21.
\end{equation}
At this resolution, the differences between the various four-qubit entanglement polytopes are already well visible.
For concreteness, suppose that we would like to use the inequality
\begin{equation*}
	\lambda^{(1)}_{\max}(\ket\psi) + \lambda^{(2)}_{\max}(\ket\psi) + \lambda^{(3)}_{\max}(\ket\psi) + \lambda^{(4)}_{\max}(\ket\psi) < 3
\end{equation*}
to deduce that $\ket\psi$ is not entangled of W-type (compare main text).
For this, it would suffice by \eqref{concrete bound} to verify that the single-particle eigenvalues of the experimentally realized state $\rho$ satisfy the relation
\begin{equation*}
  \lambda^{(1)}_{\max}(\rho) + \lambda^{(2)}_{\max}(\rho) + \lambda^{(3)}_{\max}(\rho) + \lambda^{(4)}_{\max}(\rho) < 3 - 0.21 = 2.79.
\end{equation*}
For comparison, for the symmetric Dicke state
$\ket{D^{(2)}_4} = \frac 1 {\sqrt 6} \left( \ket{0011} + \ket{0101} + \ket{0110} + \ket{1001} + \ket{1010} + \ket{1100} \right)$
the left-hand side of the inequality is equal to $2$.


\bigskip

The second, alternative approach for treating noise consists in realizing that there exists a function
$\delta(p)$ such that if the local eigenvalues are more than
$\delta(p)$ away from an entanglement polytope $\Delta_{\mathcal C}$,
then $\rho$ cannot be written as a convex combination of pure states
$\ket{\phi_i}\in \overline{\mathcal C}$. More precisely, define the distance
between an experimentally obtained collection of eigenvalues $\lambda(\rho)$ and an
entanglement polytope $\Delta_{\mathcal C}$ by
\begin{equation*}
  d\left(\lambda(\rho), \Delta_{\mathcal C}\right)
  :=
  \min_{\mu \in \Delta_{\mathcal C}}
  \,
  \sum_{k=1}^N\norm{\vec \lambda^{(k)}(\rho) - \vec\mu^{(k)}}_1.
\end{equation*}
Then the desired function $\delta(p)$ is characterized by the property that
\begin{equation*}
  d\left(\lambda(\rho),\Delta_{\mathcal C}\right)> \delta(p)
  \>
  \Longrightarrow
  \>
  \rho \not\in \conv \big\{ \ketbra \phi \phi \,:\, \phi \in
  \overline{\mathcal C}\big\}.
\end{equation*}
A simple estimate can be derived along the lines of the previous paragraph. Indeed, assume that
$d\left(\lambda(\rho),\Delta_{\mathcal C}\right) > \delta(p)$ for some
value $\delta(p)$ that is yet to be determined.
Let
$\ket\psi = \ket{\psi_1}$ be as above, and let
$\ket\phi\in{\mathcal C}$ be arbitrary. Then
\begin{equation*}
  \norm{\ketbra \psi \psi - \ketbra \phi \phi}_1
  \geq
  \norm{\ketbra \phi \phi - \rho}_1
  -
  \norm{\rho - \ketbra \psi \psi}_1
  >
  \frac{\delta(p)}{N} - \left( 1 - \sqrt{2p - 1} \right) =: \epsilon(p),
\end{equation*}
where the first step is the triangle inequality, and the second step
mimics \eqref{weyl perturbation}. Borrowing another estimate from
\cite{fuchsgraaf99}, this implies that
$
  \abs{\langle\psi|\phi\rangle}^2 < 1 - \frac 1 4 \epsilon(p)^2
$
for all $\ket{\phi} \in \mathcal C$.
Now assume for the sake of reaching a contradiction that $\rho =
\sum_i r_i \ketbra {\phi_i} {\phi_i}$ can in fact be written as a convex
combination of pure states $\ket{\phi_i}$ from $\mathcal C$.
\begin{equation*}
  \braket{\psi | \rho | \psi}
  = \sum_i r_i \, \abs{\langle\psi|\phi_i\rangle}^2
 < 1-\frac14 \epsilon(p)^2.
\end{equation*}
On the other hand, $\braket{\psi | \rho | \psi} \geq p$, which is a contradiction
whenever $p \geq 1-\frac14 \epsilon(p)^2$.
A short calculations proves that this is certainly the case if we choose
$\delta(p) := N \left( 1 + 2 \sqrt{1-p} - \sqrt{2p - 1} \right)$.

\subsection{Linear Entropy of Entanglement and its Distillation}
\label{entropy of entanglement}

\subsubsection{Entanglement Polytopes and the Origin}
\label{origin}

Pure quantum states $\rho = \ketbra \psi \psi$, whose one-body reduced density matrices are maximally mixed, i.e., $\rho^{(k)} \propto \id$, play a special role in entanglement theory. First, any such state maximizes the total uncertainty $\sum_j (\Delta R_j)^2$ of any orthonormal basis of local observables $R_j$ \cite{klyachko07}.
Second, any entanglement monotone $M(\ketbra \psi \psi) := \abs{P(\ket \psi)}$ defined in terms of a $G$-invariant homogeneous polynomial $P$ attains on $\rho$ its maximal value over all the states in the same entanglement class \cite{klyachko07,verstraetedehaenedemoor03}. To see this, observe that
\begin{equation*}
  M(\rho)
  = \abs{P(\ket \psi)}
  = \abs{P(g \ket \psi)}
  = \norm{g \ket \psi}^n \, \abs{P\left(\frac{g \ket \psi}{\norm{g \ket \psi}} \right)}
  \geq \abs{P\left(\frac{g \ket \psi}{\norm{g \ket \psi}}\right)}
  = M(g \cdot \rho)
\end{equation*}
where $n$ is the degree of $P$; the inequality follows from a result by Kempf and Ness which states that $\norm{g \ket \psi} \geq \norm{\ket \psi}$ for all $g \in G$ if the $\rho^{(k)}$ are maximally mixed \cite{kempfness79}.

The point in the entanglement polytopes corresponding to quantum states with maximally mixed one-body reduced density matrices will be called the \emph{origin} and denoted by $O$; it satisfies $\vec\lambda^{(k)} \propto (1,\ldots,1)$ for all $k$.
Clearly, $O$ cannot be expressed as a proper convex combination of any two other points in the entanglement polytope, for we have arranged each vector of eigenvalues in weakly decreasing order. Therefore, \autoref{main thm} implies that the origin is contained in the entanglement polytope if and only if there exists a covariant with weights $\vec\mu^{(k)} \propto (1,\ldots,1)$. Such a covariant is of course nothing but a $G$-invariant homogeneous polynomial; hence,
\begin{equation*}
  O \in \Delta_{\mathcal C}
  \,\Longleftrightarrow\,
  \text{$\exists$ a $G$-invariant homogeneous polynomial which is non-zero at $\rho$}.
\end{equation*}
In particular, any entanglement monotone defined via polynomial
invariants necessarily vanishes on those quantum states whose
entanglement polytopes do not include the origin (these states are
also called \emph{unstable} in geometric invariant theory
\cite{mumfordfogartykirwan94}; their complement being the
\emph{semi-stable} states).
This observation has lead to the suggestion that unstable states
should be considered ``unentangled''
\cite{klyachko07}
or ``not genuinely multipartite entangled'' \cite{osterloh_constructing_2005,osterlohsiewert06}, even though
they might be entangled according to the standard notion of
entanglement that we have adopted in this work.

\subsubsection{Linear Entropy of Entanglement}

The geometric picture provided by entanglement polytopes suggests another way of quantifying entanglement, which assigns non-zero values to all entangled states, namely the Euclidean distance of the collection of local eigenvalues. This quantity is precisely equal to the $\ell_2$-norm distance by which the reduced density matrices differ from being maximally mixed. Moreover, it is by an affine transformation related to the multi-particle version of the \emph{(linear) entropy of entanglement} \cite{zurekhabibpaz93,furuyanemespellegrino98,barnumknillortizetal04} given by
\begin{equation*}
  E(\rho)
  = E(\vec\lambda^{(1)}, \ldots \vec\lambda^{(N)})
  = 1 - \frac{1}{N}\sum_{k=1}^N \norm{\vec\lambda^{(k)}}^2_2
  = 1 - \frac{1}{N}\sum_{k=1}^N \norm{\rho^{(k)}}^2_2.
\end{equation*}
Indeed,
\begin{equation}
  \label{euclid vs entropy}
  \norm{(\vec\lambda^{(1)}, \ldots, \vec\lambda^{(N)}) - O}^2_2
  = \sum_{k=1}^N \norm{\rho^{(k)} - \frac \id {d_k}}^2_2
  = \sum_{k=1}^N \norm{\rho^{(k)}}^2_2 - \frac 1 {d_k}
  = \left(N - \sum_{k=1}^N \frac 1 {d_k} \right) - N E(\rho).
\end{equation}
The linear entropy of entanglement admits an operational interpretation \cite{boixomonras08}; in the case of multiple qubits, it reduces to the Meyer--Wallach measure of entanglement, and to the concurrence in the case of two qubits \cite{meyerwallach02,brennen03,hillwootters97}.

The following result then follows immediately from convexity.
It naturally generalizes the properties satisfied by states with maximally mixed one-body reduced density matrix.

\begin{prp}
  Any entanglement polytope $\Delta_{\mathcal C}$ contains a unique point of minimal Euclidean distance to the origin $O$.
  The corresponding quantum states $\rho$ maximize the linear entropy of entanglement $E(\rho)$ over all states in $\overline{\mathcal C}$.
\end{prp}

The maximal entropy of entanglement that can be obtained from states in the closure of the entanglement class $\mathcal C$,
\begin{equation*}
  E(\Delta_{\mathcal C})
  = \max \{ E(\rho) : \rho \in \overline{\mathcal C} \}
  = \max \{ E(\vec\lambda^{(1)}, \ldots \vec\lambda^{(N)}) : (\vec\lambda^{(1)}, \ldots \vec\lambda^{(N)}) \in \Delta_{\mathcal C} \}
\end{equation*}
is in view of \eqref{euclid vs entropy} a simple function of the Euclidean distance of the entanglement polytope to the origin and can thus be computed easily.

\subsubsection{Entanglement Distillation}
\label{distillation}

Given the linear entropy of entanglement $E(\rho)$ as a means of quantifying multi-particle entanglement, it is natural to ask for a corresponding distillation procedure, i.e., a protocol for transforming the quantum state $\rho$ by SLOCC operations to a state with maximal linear entropy of entanglement. This might only be possible asymptotically, as the maximum might be contained in the boundary of $\overline{\mathcal C}$.

The function $E(\rho)$ is smooth and can therefore be maximized locally by following its gradient flow in $\mathcal C = G \cdot \rho$. By \eqref{moment map equation},
\begin{equation}
  \label{differential}
  \restrict{dE}{\rho}
  = - \frac{2}{N} d\braket{\Phi, (\rho^{(1)}, \ldots, \rho^{(N)})}\restrict{}{\rho}
  = - \frac{2}{N} \omega_{\text{FS}}(\widehat{i X}_\rho, -)
  = - \frac{2}{N} \tr(\widehat X_\rho -),
\end{equation}
where we denote by $\widehat X_\rho$ the tangent vector generated at $\rho$ by the infinitesimal action of the collection of reduced density matrices.
The last identity holds since $\PP(\mathcal H)$ is a K\"ahler manifold. Therefore, the gradient of $E$ is at every point $\rho = \ketbra \psi \psi$ equal to
\begin{equation}
  \label{gradient}
  \grad E(\rho)
  = - \frac 2 N \hat X_\rho
  = - \frac 2 N \restrict{\frac d {dt}}{t=0} \frac {(e^{t \rho^{(1)}} \otimes \ldots \otimes e^{t \rho^{(N)}}) \ketbra \psi \psi (e^{t \rho^{(1)}} \otimes \ldots \otimes e^{t \rho^{(N)}})} {\norm{(e^{t \rho^{(1)}} \otimes \ldots \otimes e^{t \rho^{(N)}}) \ket\psi}^2},
\end{equation}
which factors over the action \eqref{SL action} of the Lie algebra of $G$. Therefore, any solution $\rho_t$ to the gradient flow equation remains in the entanglement class of the initial value $\rho_0 = \rho$ at all times $t > 0$.

In mathematical terms, the Euclidean distance to the origin and hence $E(\rho)$ are closely related to the norm square of the moment map (cf.~\autoref{covariants}), and the latter is a minimally degenerate Morse function in the sense of Kirwan \cite{kirwan84}. The corresponding analog of \eqref{gradient} had already been noticed in \cite{kirwan84}. Moreover, it was established that while the solution $\rho_t$ of the gradient flow equation might not necessarily converge to a unique limit point, $E$ is sufficiently well-behaved so that $E(\rho_t)$ always converges to the global maximum as $t \rightarrow \infty$. We summarize:

\begin{thm}
  \label{distillation thm}
  By following the gradient flow of $E$, which at any point $\rho$ is given by the infinitesimal action of the reduced density matrices \eqref{gradient}, the global maximum of the linear entropy of entanglement in the closure of the entanglement class of $\rho$ is reached (possibly asymptotically).
\end{thm}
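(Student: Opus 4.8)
The plan is to identify $E$, up to an affine transformation, with minus the norm square of the moment map, and then to feed this into Kirwan's Morse theory for that function \cite{kirwan84}. By \eqref{euclid vs entropy} we have $E(\rho) = \frac{1}{N}\bigl(N - \sum_k \frac{1}{d_k}\bigr) - \frac{1}{N}\norm{\mu(\rho)}_2^2$, where $\mu = \Phi - O$ is (the traceless part of) the moment map for the action of $K$ on $\PP(\mathcal H)$ with respect to $\omega_{\text{FS}}$ (cf.~\autoref{covariants}). Maximising $E$ along its gradient flow is therefore precisely the negative gradient flow of $f := \norm{\mu}_2^2$, and by \eqref{differential}--\eqref{gradient} this flow is generated at every point by the infinitesimal action of the collection of reduced density matrices, i.e.\ it is the flow in the statement. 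As noted just above the theorem, the vector field \eqref{gradient} factors over the $G$-action \eqref{SL action}; since $\PP(\mathcal H)$ is compact the flow $\rho_t$ exists for all $t \geq 0$ and stays inside $\mathcal C = G\cdot\rho$. Because $E(\rho_t)$ is non-decreasing and bounded above, $\lim_{t\to\infty} E(\rho_t)$ exists, and it only remains to identify this limit with $E(\Delta_{\mathcal C}) = \max_{\overline{\mathcal C}} E$.

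For this I would invoke that $f$ is a minimally degenerate Morse function in the sense of Kirwan, whose $K$-invariant Morse stratification $\PP(\mathcal H) = \bigsqcup_\beta S_\beta$ restricts to the $G$-invariant subvariety $\overline{\mathcal C}$. Each $S_\beta$ is $G$-invariant, $f \geq \norm{\beta}^2$ on $S_\beta$, and the negative gradient flow of $f$ preserves $S_\beta$ and drives $f(\rho_t) \to \norm{\beta}^2$, the constant value of $f$ on the critical set onto which $S_\beta$ retracts. Let $\beta$ be the unique index with $\rho \in S_\beta$; then $\mathcal C \subseteq S_\beta$, so $\overline{\mathcal C} \subseteq \overline{S_\beta}$, which meets only strata $S_{\beta'}$ with $\norm{\beta'} \geq \norm{\beta}$. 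Hence $f \geq \norm{\beta}^2$ everywhere on $\overline{\mathcal C}$, so $\min_{\overline{\mathcal C}} f \geq \norm{\beta}^2$; conversely $\rho_t \in \mathcal C \subseteq \overline{\mathcal C}$ forces $\min_{\overline{\mathcal C}} f \leq \lim_t f(\rho_t) = \norm{\beta}^2$. Therefore $\lim_t f(\rho_t) = \norm{\beta}^2 = \min_{\overline{\mathcal C}} f$, and translating back through \eqref{euclid vs entropy} gives $E(\rho_t) \to \max_{\overline{\mathcal C}} E$. The same argument shows that any limit point of the flow realises this global maximum, so that no spurious local maxima occur.

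The crux is the invocation of Kirwan's theory in the second step. One must justify that the negative gradient flow of $\norm{\mu}^2$, started inside the stratum $S_\beta$ containing $\rho$, genuinely descends all the way to the critical value $\norm{\beta}^2$ instead of stalling at a higher level or drifting into the boundary $\overline{\mathcal C}\setminus\mathcal C$ --- a subtlety already treated in \cite{kirwan84} via the local structure of the strata near their critical sets --- and that this analysis, originally set up for smooth ambient manifolds, applies after restriction to the singular $G$-invariant variety $\overline{\mathcal C}$. One also needs the identification of $\norm{\beta}^2$ with the squared Euclidean distance from the origin $O$ to $\Delta_{\mathcal C}$, which follows by combining \autoref{main thm} with the Kempf--Ness description of $\norm{\mu}^2$ on a $G$-orbit closure \cite{kempfness79, mumfordfogartykirwan94}; this is where the convexity theory of the preceding sections is actually used. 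Note that one does not need the stronger (and harder) statement that $\rho_t$ itself converges: Kirwan's theorem delivers convergence of the value $E(\rho_t)$, which is all the distillation procedure requires.
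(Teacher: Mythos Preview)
Your proposal is correct and follows essentially the same approach as the paper: both reduce to Kirwan's theory of the norm square of the moment map as a minimally degenerate Morse function on $\PP(\mathcal H)$, identifying $E$ with an affine function of $-\norm{\mu}^2$ via \eqref{euclid vs entropy}. In fact the paper gives no proof at all beyond citing \cite{kirwan84} and stating the result, so your explicit unpacking of the Morse stratification argument and your careful flagging of the subtleties (restriction to the singular variety $\overline{\mathcal C}$, convergence of values rather than of $\rho_t$ itself) goes well beyond what the paper supplies.
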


In practice, the gradient flow in \autoref{distillation thm} would need to be implemented with finite time steps.
That is, after preparing the quantum state $\rho$ one measures its one-body reduced density matrices, re-prepares and performs local POVM measurements with Kraus operators $S^{(k)} \propto e^{t \rho^{(k)}}$, $F^{(k)} = \sqrt{\id - (S^{(k)})^\dagger S^{(k)}}$ for sufficiently small but finite $t > 0$ (cf.~\eqref{SL POVM}). If the outcomes of these measurements are $S^{(1)}$, \ldots, $S^{(N)}$ then entanglement has been distilled. By successively repeating this procedure and concatenating the SLOCC operations, one asymptotically arrives at a quantum state with maximal entropy of entanglement.
Notably, this method of entanglement distillation only requires local tomography and works on a single copy of the state at a time.

From a theoretical perspective, the limit point $\lim_{t \rightarrow \infty} \rho_t$ of the gradient flow can also be seen as a normal form of the state $\rho$ in its entanglement class (in case this limit exists). This is the point of view taken in \cite{verstraetedehaenedemoor03}, where a similar numerical algorithm has been given for the special case where $O$ is contained in the entanglement polytope.

\subsection{Examples}
\label{examples}

By associating to every entanglement class its entanglement polytopes, we have obtained a finite yet systematic classification of multi-particle entanglement. In this section, we will illustrate their computation and application by a series of examples for systems of several qubits.

Mathematically, the covariants of a multi-qubit system are in one-to-one correspondence with the covariants of binary multilinear forms, whose study is a prominent topic in classical invariant theory \cite{olver99,luque07}.
Before we proceed, we introduce some notational simplifications. It will be convenient to represent the local eigenvalues of a system of $N$ qubits by the tuple $(\lambda^{(1)}_{\max}, \ldots, \lambda^{(N)}_{\max}) \in [1/2,1]^N$ of \emph{maximal} local eigenvalues. This is of course without loss of information, since the sum of the two eigenvalues of any qubit density matrix, which is a $2 \times 2$ positive semidefinite Hermitian matrix of trace one, is equal to unity.
Similarly, we may label the weights of a covariant by the tuple $(\mu^{(1)}_{\max}, \ldots, \mu^{(N)}_{\max}) \in \{\lceil n/2 \rceil, \ldots, n \}^N$, since the sum of the components of each weight $\vec\mu^{(k)}$ has been fixed to be equal to the degree $n$ of the covariant.

\subsubsection{Three Qubits}
\label{three qubits example}

In the case of three qubits, with Hilbert space $\mathcal H = \CC^2 \otimes \CC^2 \otimes \CC^2$, there are six distinct entanglement classes \cite{durvidalcirac00}: The classes of the GHZ state \cite{greenbergerhornezeilinger89} and of the W state \cite{durvidalcirac00},
\begin{equation*}
  \ket\GHZ = \frac{1}{\sqrt{2}}(\ket{\uparrow\uparrow\uparrow} + \ket{\downarrow\downarrow\downarrow}), \quad
  \ket W = \frac{1}{\sqrt{3}}(\ket{\uparrow\uparrow\downarrow} +\ket{\uparrow\downarrow\uparrow} + \ket{\downarrow\uparrow\uparrow}),
\end{equation*}
three classes that correspond to EPR states shared between any two of the three subsystems,
\begin{equation*}
  \ket{B1} = \frac 1 {\sqrt 2} (\ket{\uparrow\uparrow\downarrow} - \ket{\uparrow\downarrow\uparrow}), \quad
  \ket{B2} = \frac 1 {\sqrt 2} (\ket{\uparrow\uparrow\downarrow} - \ket{\downarrow\uparrow\uparrow}), \quad
  \ket{B3} = \frac 1 {\sqrt 2} (\ket{\uparrow\downarrow\uparrow} - \ket{\downarrow\uparrow\uparrow}),
\end{equation*}
and the separable class represented by $\ket\SEP = \ket{\uparrow\uparrow\uparrow}$.

We shall now compute the corresponding entanglement polytopes by
following the general method of covariants described in
\autoref{properties and computation}. Using techniques crafted towards
the special situation of three qubits, the same polytopes have already
been computed in \cite{han_compatible_2004,sawickiwalterkus12}; the
corresponding quantum marginal problem, which as we have explained
amounts to computing the maximal entanglement polytope, has been
solved in \cite{higuchisudberyszulc03}. A minimal set of generators of
the covariants of a three-qubit system (in fact, of the equivalent
question for binary three-linear forms), has been determined in late
19th century invariant theory \cite{lepaige81}: There are six
generators, and we have summarized their properties in \autoref{three
qubit covariants}. By \autoref{main cor}, computing the entanglement
polytopes is now a mechanical task: for any quantum state representing
the entanglement class, we merely need to collect those covariants
which do not vanish on the state, and take the convex hull of their
normalized weights.

The resulting entanglement polytopes are illustrated in \autoref{three qubit polytopes}. They are in one-to-one correspondence to the six entanglement classes described above; that is, in this particular case there is no coarse-graining.
As explained before, one polytope is contained in the other if quantum states in the former class can be approximated arbitrarily well by states in the latter class. In this case, this is also a necessary condition, since there is no coarse-graining. Since the GHZ-class polytope is maximal, it follows that all states can be approximated arbitrarily well by states of GHZ type. In mathematical terms, the GHZ class is dense; this is of course well-known \cite{durvidalcirac00}. Similarly, the polytope of the W class (upper pyramid) contains all entanglement polytopes \emph{except} the GHZ one, so that by states in the W class one can approximate all states except those of GHZ class.

We now illustrate our method of entanglement witnessing:
\begin{itemize}
\item If the point $(\lambda^{(1)}_{\max},\lambda^{(2)}_{\max},\lambda^{(3)}_{\max})$ corresponding to the collection of local eigenvalues is contained in the lower part of the GHZ entanglement polytope (\autoref{three qubit polytopes}, (A)),
  \begin{equation*}
    \lambda^{(1)}_{\max} + \lambda^{(2)}_{\max} + \lambda^{(3)}_{\max} < 2,
  \end{equation*}
  then it is by \eqref{appendix criterion} not contained in any other entanglement polytope, and therefore the quantum state at hand must be entangled of GHZ type.

\item More generally, if the point is \emph{not} contained in any of the polytopes corresponding to an EPR state shared between two of the three particles (\autoref{three qubit polytopes}, (c), which includes (d)), i.e., if
  \begin{equation*}
    \lambda^{(k)}_{\max} < 1
    \quad
    (\forall k=1,2,3),
  \end{equation*}
  then by \eqref{appendix criterion} the quantum state at hand must be entangled of either GHZ or W classes.
  These classes of states are the ones that possess genuine three-qubit entanglement.
\end{itemize}

As a final example, we consider the quantum state $\rho = \ketbra \psi
\psi$, where
\begin{equation*}
  \ket\psi = \frac 1 {\sqrt 7} \left( \ket{\uparrow\uparrow\uparrow} + \ket{\uparrow\uparrow\downarrow} + \ket{\downarrow\uparrow\uparrow} + 2 \ket{\downarrow\downarrow\uparrow} \right).
\end{equation*}
It is easy to verify by the method of covariants that the entanglement
polytope of $\rho$ is full-dimensional, and by the above
classification it follows that $\rho$ is of GHZ type. However, its
collection of local eigenvalues,
$(\lambda^{(1)}_{\max},\lambda^{(2)}_{\max},\lambda^{(3)}_{\max})
\approx (0.76, 0.79, 0.88)$, is contained in the interior of the upper
pyramid. As we have just discussed, the entanglement criterion
\eqref{appendix criterion} in this case only allows us to conclude that $\rho$
is of either GHZ or W type. By using the entanglement distillation
procedure described in \autoref{distillation} we can however transform
$\rho$ by into another state whose local eigenvalues are arbitrarily
closed to the origin $O = (0.5,0.5,0.5)$ (\autoref{three qubit
distillation}). In this way, we arrive at quantum states which are
both more entangled and for which our entanglement criterion
\eqref{appendix criterion} is maximally informative.

\subsubsection{Four Qubits}
\label{four qubits example}

The case of three qubits was rather special, since there the
entanglement polytopes represent faithfully the hierarchy of
closures of the corresponding entanglement classes---no coarse
graining takes place.
In contrast, for four qubits the situation is the
generic one: There are infinitely many entanglement classes.
According to the classification of
\cite{verstraetedehaenedemooretal02}, they can be partitioned into
nine families with up to four complex continuous parameters each.
Neither the families themselves, nor the complex parameters within these
families are directly experimentally accessible.

Here, we sketch a proof showing that the polytopal view strikes an
attractive balance between reducing the complexity sufficiently to
allow for simple experimental criteria on the one hand, and preserving
a non-trivial structure on the other hand. Indeed, following the
tradition established in \cite{durvidalcirac00} and continued in
\cite{verstraetedehaenedemooretal02}, we may state that through
the polytopal lense, \emph{four qubits can be entangled in seven different ways}.

We have determined all entanglement polytopes of four qubits using
the general method (\autoref{properties and computation})
applied to a minimal generating set of 170
covariants found in \cite{briandluquethibon03}. More precisely,
for every family in \cite{verstraetedehaenedemooretal02}, we have
analytically computed the covariants using a computer algebra system.
Deciding whether a normalized weight $\frac1{n_j}\vec\mu_j^{(k)}$ is included
in $\Delta_{\mathcal C}$ then amounts to solving the explicit polynomial
equation $P_j(\rho)\neq 0$. The algebra system can readily perform
such calculations analytically.

The polytope $\Delta$ formed by the marginal eigenvalues of all
possible pure states is the convex hull of $12$ vertices. These can
be easily described as follows: One vertex,
$(\lambda_{\max}^{(1)}, \dots, \lambda_{\max}^{(4)})=(1,1,1,1)$,
corresponds to product states; six vertices
\begin{equation*}
	(0.5, 0.5, 1, 1), \>
	(0.5, 1, 0.5, 1), \>
	(0.5, 1, 1, 0.5), \>
	(1, 0.5, 0.5, 1), \>
	(1, 0.5, 1, 0.5), \>
	(1, 1, 0.5, 0.5)
\end{equation*}
belong to two-partite entangled states; four vertices
\begin{equation*}
	(0.5, 0.5, 0.5, 1), \>
	(0.5, 0.5, 1, 0.5), \>
	(0.5, 1, 0.5, 0.5), \>
	(1, 0.5, 0.5, 0.5)
\end{equation*}
to three-partite entangled states, and one vertex $(0.5, 0.5, 0.5, 0.5)$
which is the image of a four-partite entangled state (not necessarily
\emph{genuinely} four-partite entangled---see \autoref{sec:genuine}).
This latter vertex is the origin as defined in \autoref{origin}.

As in the three-qubit case, there are several lower-dimensional
subpolytopes corresponding to bi-separable states. These are obtained
by embedding the entanglement polytopes found in the previous section
into the full four-qubit polytope in the obvious way. All
possibilities are listed in \autoref{four qubit lower dimensional}.

We turn to the genuinely four-body entangled states. There are
seven such polytopes, all full-dimensional. Their definitions and some of their
properties are listed in \autoref{four qubits table} and
\autoref{fig:four qubits}. The four-qubit $W$-state (or Dicke
state)
$
\ket{W_4} = \frac{1}{\sqrt{4}}(
\ket{\uparrow\uparrow\uparrow\downarrow} +
\ket{\uparrow\uparrow\downarrow\uparrow} +
\ket{\uparrow\downarrow\uparrow\uparrow} +
\ket{\downarrow\uparrow\uparrow\uparrow})
$
corresponds to polytope~5. As is the case for three qubits, the
polytope is an ``upper pyramid'', i.e., it is the intersection of the
full polytope $\Delta$ with the half-space defined by
\begin{equation}
  \label{four qubit W}
  \lambda^{(1)}_{\max} + \lambda^{(2)}_{\max} +
	\lambda^{(3)}_{\max} + \lambda^{(4)}_{\max} \geq 3.
\end{equation}
Again in analogy to the three-qubit case, we can take any violation of
\eqref{four qubit W} as an indication of ``high entanglement''.
One way to make this precise is to read off \autoref{four qubits table} that violations imply that the state $\ket\psi$ can be converted into
one with entropy of entanglement at least $0.45$ (which might be much
higher than $E(\ket\psi)$ as obtained from the measured data!).
The entanglement classes of the four-qubit GHZ state and of the cluster states \cite{briegelraussendorf01} are associated with the full polytope (number~7).

There is a numerical coincidence between our findings and the ones in
\cite{dokovicosterloh09}, where also seven non-biseparable
entanglement classes have been identified on four qubits. The two
classifications are, however, not
identical.  Indeed, \cite{dokovicosterloh09} is based purely on
\emph{invariants}, as opposed to the more general
\emph{covariant}-theoretic description of our polytopes. As mentioned
in \autoref{entropy of entanglement}, invariants cannot differentiate
unstable entangled states from product states. Hence, the 7 classes of
\cite{dokovicosterloh09} must all be semi-stable. However the same is
true only for our polytopes 4, 6 and 7. In this sense, the polytope
methods yields a finer classification for unstable vectors, whereas
\cite{dokovicosterloh09} provides a better resolution of the stable
case.

In summary, up to permutations, there are 12 entanglement polytopes for
four qubits, 7 of which belong to genuinely four-partite entangled
classes. The numbers increase to 41 and 22, respectively, if distinct
permutations are counted separately.


\subsubsection{\texorpdfstring{$N$ Qubits and Genuine Multipartite Entanglement}{N Qubits and Genuine Multipartite Entanglement}}
\label{sec:genuine}

In the examples so far, bi-separable states mapped onto polytopes of
lower dimension. This is no longer true for $N\geq 6$ particles (for $N=6$, the
entanglement polytope associated with $\ket{GHZ_3}\otimes\ket{GHZ_3}$
is clearly 6-dimensional). However, it remains true that spectral
information alone can be used to witness genuine $N$-qubit
entanglement for any $N$.
A general theory of genuine entanglement detection from spectral
information will be presented elsewhere. Here, we merely give one
example valid for any number $N$ of qubits. The result is stated in
the language of \cite{guehne_multipartite_2005}; see also
\cite{horodecki_separability_2001,horodecki09,guhne_entanglement_2009,levi_hierarchy_2012,seevinck_sufficient_2001,huber_detection_2010}
and references therein.
Call a vector $\ket\psi$ \emph{producible using $k$-partite
entanglement} if it is of the form
$\ket\psi=\ket{\psi_1}\otimes\dots\otimes\ket{\psi_m}$, where every
$\ket{\psi_i}$ is contained in the tensor product of at most $k$
sites---otherwise, $\ket\psi$ is said to contain \emph{genuine $(k+1)$-partite entangled}. In particular, the states that contain genuine $N$-partite entanglement are precisely those which are not \emph{biseparable}, i.e., those which do not factorize with respect to any non-trivial (bi)partition of the subsystems.
The reader should not be confused by the fact that some authors use
the term ``genuine $N$-partite entanglement'' in a different sense;
c.f.\ \cite{osterloh_constructing_2005,osterlohsiewert06}, where semi-stability (in the sense of
\autoref{entropy of entanglement}) is additionally required for ``genuine''
entanglement.

Recall first that for a system of $N$ qubits the solution of the quantum marginal problem (i.e., the maximal entanglement polytope) is given by the inequalities
\begin{equation}
  \label{higuchi}
  \lambda^{(i)}_{\min} \leq \sum_{i \neq j} \lambda^{(j)}_{\min}
\end{equation}
for the \emph{smallest} local eigenvalues $\lambda^{(i)}_{\min} = 1-\lambda^{(i)}_{\max} \in [0,0.5]$ \cite{higuchisudberyszulc03}. Therefore, the polytope for the class of states that factorize with respect to a given partition $A_1 \cup \ldots \cup A_m = \{1,\ldots,N\}$ is defined by the constraints
\begin{eqnarray}
  \label{higuchi for partition}
  \lambda_{\min}^{(i)} \leq \sum_{i \neq j \in A_l}
  \lambda^{(j)}_{\min} &\qquad&
  \forall\> i \in A_k,\>k=1,\ldots,m
\end{eqnarray}

The following lemma shows that information on the local eigenvalues can serve as a witness for genuine $k$-qubit entanglement (cf.~Fig.~3):

\begin{lem}
  Let $N \geq 3$ and $k \in \{\lceil N/2 \rceil,\ldots,N\}$. For every $\gamma\in(0,0.5\frac{N-1}{k-1}]$ the local eigenvalues
  \begin{equation*}
    (\lambda_{\min}^{(1)}, \dots, \lambda_{\min}^{(N)})
    =
    \left(\gamma \frac{k-1}{N-1},
    \frac{\gamma}{N-1}, \dots, \frac{\gamma}{N-1}\right)
  \end{equation*}
  can originate only from genuinely $k$-partite entangled (pure) quantum states of $N$ qubits.
  Conversely, if $k\neq N-1$ then there exists a realization in terms of a state producible using $k$-partite entanglement.
\end{lem}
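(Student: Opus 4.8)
\emph{Strategy.} The plan is to read off the first (``only from'') claim directly from the Higuchi--Sudbery--Szulc description \eqref{higuchi} of the one-body qubit marginal polytope applied to sub-blocks, and to settle the converse by an explicit two-block construction, again invoking \eqref{higuchi} only to guarantee the existence of the block states.

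\emph{Obstruction.} I would argue by contradiction: suppose $\ket\psi$ realizes the stated marginals but is producible using $(k-1)$-partite entanglement, say $\ket\psi=\ket{\psi_1}\otimes\cdots\otimes\ket{\psi_m}$ with each $\ket{\psi_l}$ supported on a block $A_l$ of size $\abs{A_l}\leq k-1$. The restriction of a product state to one of its blocks is again pure, so the one-qubit marginals of $\ket{\psi_l}$ are exactly the global $\vec\lambda^{(j)}$ for $j\in A_l$, and they must lie in the $\abs{A_l}$-qubit marginal polytope \eqref{higuchi} (equivalently \eqref{higuchi for partition}). Applying this for the block $A_{l_0}$ containing site $1$ and for the index $i=1$,
\[
  \gamma\frac{k-1}{N-1} = \lambda^{(1)}_{\min} \leq \sum_{j\in A_{l_0},\,j\neq 1}\lambda^{(j)}_{\min} = (\abs{A_{l_0}}-1)\frac{\gamma}{N-1} \leq (k-2)\frac{\gamma}{N-1},
\]
because every site $j\neq 1$ carries $\lambda^{(j)}_{\min}=\gamma/(N-1)$ and $\abs{A_{l_0}}\leq k-1$. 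For $\gamma>0$ this forces $k-1\leq k-2$, a contradiction; hence $\ket\psi$ is genuinely $k$-partite entangled. (The endpoint $\gamma=0$ is the fully separable case and is to be read as degenerate.)

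\emph{Converse.} Assuming $k\neq N-1$ I would exhibit a realizing state producible using $k$-partite entanglement. If $k=N$ this is immediate, since every state is $N$-producible and a pure state with the stated marginals exists: the target vector lies in $[0,0.5]^N$ (here $\gamma\leq 0.5\frac{N-1}{k}$ is used) and satisfies \eqref{higuchi}. Otherwise $k<N-1$, so $N-k\geq 2$, and also $N-k\leq k$ since $k\geq\lceil N/2\rceil$. I would take the partition $A_1=\{1,\dots,k\}$, $A_2=\{k+1,\dots,N\}$ and put $\ket\psi=\ket{\psi_1}\otimes\ket{\psi_2}$, where $\ket{\psi_1}$ is a pure state on $A_1$ whose one-qubit minimal eigenvalues are $\gamma\frac{k-1}{N-1}$ on site $1$ and $\gamma/(N-1)$ on the remaining $k-1$ sites, and $\ket{\psi_2}$ is a pure state on $A_2$ with all one-qubit minimal eigenvalues equal to $\gamma/(N-1)$. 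Both states exist by \eqref{higuchi}: for $A_1$ the inequalities reduce to $k\geq 2$ (which holds since $N\geq 3$ in this case), for $A_2$ they reduce to $N-k\geq 2$, and all spectra lie in $[0,0.5]$ by the bound on $\gamma$. By construction $\ket\psi$ has the prescribed global marginals and every block has at most $k$ sites.

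\emph{Main obstacle.} The only point that needs care is the bookkeeping that singles out $k=N-1$: forcing the ``heavy'' site $1$ into a block of size $\leq k$ makes that block absorb $k-1$ of the ``light'' sites, and the $N-k$ leftover light sites can be collected into a single admissible block exactly when $N-k$ is $0$ or lies in $\{2,\dots,k\}$---both alternatives automatic from $k\geq\lceil N/2\rceil$ except for $N-k=1$. When $N-k=1$ one is left with a lone site of nonzero minimal eigenvalue, which can neither stand alone as a pure one-qubit state nor be merged without exceeding block size $k$; this is precisely why the converse excludes $k=N-1$. Everything else reduces to a direct use of the Higuchi--Sudbery--Szulc inequalities, serving as the obstruction in one direction and as the existence criterion for the block states in the other.
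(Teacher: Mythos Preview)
Your proof is correct and follows essentially the same approach as the paper: both arguments apply the Higuchi--Sudbery--Szulc inequalities \eqref{higuchi for partition} to the block containing site~$1$ to force that block to have size at least $k$, and both use the bipartition $A_1=\{1,\dots,k\}$, $A_2=\{k+1,\dots,N\}$ for the converse. Your treatment is slightly more explicit---you separate the case $k=N$, spell out the verification of \eqref{higuchi} on each block, and flag the degenerate endpoint $\gamma=0$---but these are presentational refinements rather than a different argument.
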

\begin{proof}
  Consider any bipartition $A_1 \cup A_2 = \{1, \ldots, N\}$.
  Without loss of generality, assume that $1 \in A_1$.
  Then \eqref{higuchi for partition} for $i=k=1$ reads
  \begin{equation}
    \label{higuchi for the lemma}
    \gamma\frac{k-1}{N-1}
    \leq
    \sum_{1 \neq j \in A_1} \gamma\frac{1}{N-1}
    =\gamma\frac{\abs{A_1}-1}{N-1}
    \qquad
    \Leftrightarrow
    \qquad
    \abs{A_1} \geq k.
  \end{equation}
  This immediately proves that any $\ket\psi$ with the advertised
  local spectra is genuinely $k$-partite entangled. For if such a $\ket\psi$ factorizes with respect to some partition, then by \eqref{higuchi for the lemma} at least one factor must comprise at least $k$ sites.

  To show that the local eigenvalues can indeed be realized using $k$-partite entanglement, consider the bipartition $A_1 = \{1,\ldots,k\}$, $A_2 = \{k+1,\ldots,N\}$. Then \eqref{higuchi for the lemma} is satisfied, and so are all other inequalities in \eqref{higuchi for partition} for $A_1$. The constraints for $A_2$ are satisfied if and only if $\abs{A_2} \neq 1 \, \Leftrightarrow \, k\neq N-1$, which is true by assumption.
\end{proof}

In other words, certain correlations between the one-particle reduced density matrices can only be explained by the presence of genuine $k$-partite entanglement.
This result holds in fact for all choices of local dimensions, since we can always reduce to the qubit situation by considering local eigenvalue spectra of rank at most two.

\subsubsection{Bosons and Fermions}

The theory of entanglement polytopes is readily adapted to the case of indistinguishable particles. Here, the Hilbert space is no longer a tensor product, but the symmetric or antisymmetric subspace in $(\CC^d)^{\otimes N}$, where $d$ is the local dimension. Since the particles are indistinguishable, all one-body reduced density matrices are equal. Therefore, the entanglement polytope of an entanglement class $\mathcal C$ is in the case of indistinguishable particles defined to be
\begin{equation}
	\label{entanglement polytope bosons fermions}
  \Delta_{\mathcal C} =
  \left\{
    \vec\lambda^{(1)}
    \,:\,
    \vec\lambda^{(1)} = \vec\lambda(\rho^{(1)}),\,
    \rho \in \overline{\mathcal C}
  \right\}.
\end{equation}
The reduced density matrices can be diagonalized by the coherent action of the group $K = \SU(d)$, and the corresponding group of SLOCC operations is $G = \SL(d)$ \cite{mathonetkrinsgodefroidetal10}. Using these definitions, the theory can be developed in precisely the same way as above.

\bigskip

We shall now illustrate this for the case of a system of $N$ indistinguishable two-level systems with bosonic statistics, with Hilbert space $\mathcal H = \Sym^N(\CC^2) \subseteq (\CC^2)^{\otimes N}$.
As in the preceding examples, we may represent the vector of local eigenvalues of the one-body reduced density matrix by its maximum. Thus we may think of an entanglement polytope $\Delta_{\mathcal C}$ as a line segment or interval in $[0.5,1]$.

The entanglement polytopes can be computed as above using covariants, which in this case correspond formally to the covariants of binary forms in mathematics; these are again well-studied and explicitly known for small $N$ \cite{kungrota84,olver99}. As a convenient shortcut, we will however use a geometric argument which immediately gives the entanglement polytopes for arbitrary $N$:

\begin{lem}
  \label{boson lemma}
  For a system of $N$ bosonic qubits, the entanglement polytopes are given by intervals $[\gamma,1]$ with
  \begin{equation*}
    \gamma \in \left\{ \frac 1 2 \right\} \cup \left\{ \frac {N - k} N : k = 0, 1, \ldots, \lfloor N/2 \rfloor \right\}.
  \end{equation*}
\end{lem}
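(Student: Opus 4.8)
The plan is to recast the statement in the classical language of binary forms. A vector in $\mathcal H = \Sym^N(\CC^2)$ corresponds, up to scalar, to a binary form of degree $N$, equivalently to a multiset of $N$ points on $\PP^1$ (its roots), and $G = \SL(2)$ acts by M\"obius transformations of $\PP^1$ transported to the multiset. Hence an SLOCC class is a $\mathrm{PGL}(2,\CC)$-orbit of such a multiset, whose only discrete invariant is the partition $\mu_1 \geq \mu_2 \geq \ldots$ of $N$ recording the root multiplicities (for four or more distinct roots there are additionally continuous cross-ratio moduli, which will prove irrelevant to the polytope). Since all one-body reduced density matrices coincide and act on $\CC^2$, and since $\overline{\mathcal C}$ is compact and connected and $\lambda_{\max}$ continuous, the entanglement polytope is a compact interval $[\gamma_{\mathcal C},\gamma'_{\mathcal C}]\subseteq[1/2,1]$ (alternatively, invoke \autoref{main cor}); moreover $\lambda_{\max}=1/2$ exactly when $\rho^{(1)} = \id/2$, i.e.\ at the origin $O$. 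I will show that $\gamma'_{\mathcal C} = 1$ for every class and that $\gamma_{\mathcal C} = \max(1/2,\ \mu_1/N)$; together these give the asserted endpoints.

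For the right endpoint, apply a divergent one-parameter subgroup $t\mapsto\diag(t,t^{-1})$ of $\SL(2)$---after a preliminary local unitary placing all roots away from one of its two fixed points---so that every root flows to a single point; the limit in $\PP(\mathcal H)$ is a coherent product state $\ket v^{\otimes N}$, for which $\lambda_{\max} = 1$. As $\lambda_{\max}\leq 1$ always, $\gamma'_{\mathcal C} = 1$. For the lower bound on $\gamma_{\mathcal C}$, suppose the form of $\ket\psi$ has a root of multiplicity $\geq m$. A local unitary (which does not change the spectrum of $\rho^{(1)}$) moves that root to the point whose coherent state is $\ket0^{\otimes N}$, after which the form is divisible by the associated linear factor to order $\geq m$; equivalently, in the Dicke expansion $\ket\psi = \sum_j c_j\ket{D^N_j}$ only components having at least $m$ sites in state $\ket0$ occur. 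Hence $\braket{0|\rho^{(1)}|0}$ equals $1/N$ times the mean number of sites in state $\ket0$, so $\braket{0|\rho^{(1)}|0}\geq m/N$ and therefore $\lambda_{\max}\geq m/N$; with $\lambda_{\max}\geq 1/2$ this gives $\lambda_{\max}\geq\max(1/2,m/N)$. Since along an orbit closure roots can only merge, the maximal root multiplicity never decreases, so every state in $\overline{\mathcal C}$ still carries a root of multiplicity $\geq\mu_1$; hence $\gamma_{\mathcal C}\geq\max(1/2,\mu_1/N)$.

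It remains to match this from above. If $\mu_1\geq\lceil N/2\rceil$ (the boundary case $\mu_1 = N$ being the separable class, with polytope $\{1\}$), hold the multiplicity-$\mu_1$ root fixed and collapse the remaining $N-\mu_1$ roots onto one further point by a suitable divergent one-parameter subgroup; the limit is the Dicke state $\ket{D^N_{\mu_1}}$, whose reduced density matrix has eigenvalues $\{(N-\mu_1)/N,\ \mu_1/N\}$, so $\lambda_{\max} = \mu_1/N$. Thus $\gamma_{\mathcal C}\leq\mu_1/N$, forcing equality. If instead $\mu_1\leq\lfloor N/2\rfloor$, then no root of a representative has multiplicity exceeding $N/2$, so by the classical description of semistable binary forms (Hilbert--Mumford) the representative is semistable; equivalently, some $\SL(2)$-invariant polynomial is non-zero on it. By the characterization of the origin in \autoref{origin}, $O\in\Delta_{\mathcal C}$, i.e.\ $\gamma_{\mathcal C} = 1/2$. (Alternatively: by Kempf--Ness the unique closed orbit in $\overline{\mathcal C}$ contains a zero of the moment map, that is, a state with $\rho^{(1)} = \id/2$.)

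Finally, as $\mathcal C$ varies the value $\mu_1$ attains every integer in $\{1,\ldots,N\}$---realized, e.g., by root multiplicities $(\mu_1,1,\ldots,1)$---so $\gamma_{\mathcal C}=\max(1/2,\mu_1/N)$ ranges precisely over $\{1/2\}\cup\{\mu_1/N : \lceil N/2\rceil\leq\mu_1\leq N\}=\{1/2\}\cup\{(N-k)/N : k=0,\ldots,\lfloor N/2\rfloor\}$, the explicit $\{1/2\}$ contributing a genuinely new value only when $N$ is odd. I expect the principal obstacle to be making the two degeneration arguments fully rigorous inside projective space---verifying that the one-parameter limits land on the claimed forms rather than degenerating to $0$, and that maximal multiplicity is monotone under orbit closure---together with correctly importing the classical dictionary between bounded root multiplicity and (semi)stability; all of this is standard but needs to be handled with care.
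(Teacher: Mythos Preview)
Your proof is correct and takes a genuinely different route from the paper's. The paper argues symplectically: any vertex $\gamma>1/2$ of $\Delta_{\mathcal C}$ must be attained at a state where the differential of the moment map $\Phi$ is orthogonal to the Cartan direction, forcing the state to be torus-fixed, i.e.\ a Dicke state; it then invokes the gradient-flow distillation result (\autoref{distillation thm}) to conclude that each such Dicke value is actually the minimum $\gamma$ in its class. You instead work algebraically via the Majorana/binary-form picture: a direct Dicke-expansion estimate gives the lower bound $\gamma_{\mathcal C}\geq\max(1/2,\mu_1/N)$ from the maximal root multiplicity $\mu_1$, explicit one-parameter degenerations to Dicke states realize the bound when $\mu_1\geq\lceil N/2\rceil$, and the Hilbert--Mumford criterion for binary forms (semistable iff no root has multiplicity $>N/2$) combined with the paper's characterization of the origin handles $\mu_1\leq\lfloor N/2\rfloor$. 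Your approach buys an explicit formula $\gamma_{\mathcal C}=\max(1/2,\mu_1/N)$ tying the polytope to the class's root data, and avoids the differential/Morse-theoretic machinery; the paper's approach is shorter, stays within its general moment-map framework, and reuses the distillation theorem rather than importing the classical GIT stability dictionary for binary forms. The caveats you flag (limits landing on the intended forms in $\PP(\mathcal H)$, monotonicity of maximal multiplicity under specialization) are indeed routine but worth stating carefully.
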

\begin{proof}
  Since the unentangled states are contained in the closure of every entanglement class, every entanglement polytope $\Delta_{\mathcal C}$ contains the point $1$. The other end point of the interval, which we denote by $\gamma \geq 0.5$, is by \eqref{euclid vs entropy} directly related to the maximal entropy of entanglement $E(\Delta_{\mathcal C})$, and we need to determine the possible values of $\gamma$. First, we observe that since the generalized GHZ state $\frac 1 {\sqrt 2} ( {\ket \uparrow}^{\otimes N} + {\ket \downarrow}^{\otimes N} )$ is locally maximally mixed, $\gamma = 0.5$ can always be achieved for arbitrary $N$. Let us now suppose that $\gamma > 0.5$, i.e., the spectrum of the one-body reduced density matrix is non-degenerate and hence contained in the interior of the set $D_\downarrow$. Let $\rho = \ketbra \psi \psi$ be a quantum state whose one-body reduced density matrix is a diagonal matrix with first diagonal entry $\gamma$. By \eqref{moment polytope}, $\gamma$ can only be a vertex of the entanglement polytope if the range of the differential of $\Phi$ at $\rho$ is orthogonal to the space of diagonal observables spanned by $\sigma_z$ (otherwise, $\gamma$ could be changed infinitesimally, hence would not be a vertex). By \eqref{moment map equation}, this is the case if and only if $\widehat{i \sigma_z}_\rho$ vanishes---in other words, $\rho$ is fixed by the infinitesimal action of $i\sigma_z$ (and hence of any diagonal local operator) and thus $\ket\psi$ is an eigenvector of the $\sigma_z$-operator. Such states are known as occupation number basis states, or Dicke states \cite{dicke54}. It is easy to see that the one-body reduced density matrix of a Dicke state with $N_\uparrow$ spins pointing upward and $N_\downarrow = N - N_\uparrow$ spins pointing downward is equal to $\diag( N_\uparrow/N, N_\downarrow/N)$; therefore it is mapped into $D_\downarrow$ if $N_\uparrow \geq N_\downarrow$. Finally, we observe that since the one-body reduced density matrix is diagonal and the global state fixed by diagonal local operators, every Dicke state is a maximum of the entanglement distillation procedure from \autoref{distillation}. It follows that $\gamma$ is indeed minimal, hence a vertex of the entanglement polytope.
\end{proof}

By associating with each of the entanglement polytopes the set of corresponding quantum states, we obtain $\lceil N/2 \rceil + 1$ families of quantum states. Each family generically consists of infinitely many entanglement classes, except for the unentangled one ($\gamma=1$), and it can be characterized operationally by the maximal linear entropy of entanglement $E(\Delta_{\mathcal C})$ that can be achieved by states in the family.
Furthermore, it follows from the proof of \autoref{boson lemma} that the Dicke states for $\gamma > 0.5$ are (up to local unitaries) uniquely characterized by the property that they attain the maximal entropy of entanglement in their entanglement polytope. In contrast, there are generically infinitely many entanglement classes that contain quantum states which are locally maximally mixed (i.e., sent to the origin, corresponding to $\gamma = 0.5$); this follows because for $N \geq 4$ there is more than a single invariant. For example, the four qubit GHZ state and the Dicke state with $N_\uparrow = N_\downarrow = 2$ are both locally maximally mixed, but in a different entanglement class.


\bigskip

Our second example is a system of three fermions with local rank six, as described by the Hilbert space $\mathcal H = \bigwedge^3 \CC^6$.
In \cite{borlanddennis72}, Borland and Dennis proved that the set of local eigenvalues compatible with a global pure state is constrained by the inequalities
\begin{align}
	\label{borland dennis}
	\lambda_1 + \lambda_6 =
	\lambda_2 + \lambda_5 =
	\lambda_3 + \lambda_4 = 1,
	\quad
  \lambda_1 + \lambda_2 \leq \lambda_3 + 1.
\end{align}
Here we write $\lambda_k = \lambda^{(1)}_k$ for the $k$-th eigenvalue of the one-body reduced density matrix $\rho^{(1)}$, which---following usual conventions in quantum chemistry---is normalized to $\tr \rho^{(1)} = 3$.
This is the solution to the quantum marginal problem, which in this context is also known as the \emph{pure-state $N$-representability problem}.
Notably, \eqref{borland dennis} describes a three-dimensional convex polytope in six-dimensional Euclidean space (due to the three equality constraints). We may therefore without loss of information work in three-dimensional space by consider its projection onto the largest three eigenvalues, $(\lambda_1,\lambda_2,\lambda_3)$.

In the Borland--Dennis system there are four distinct entanglement classes \cite{levayvrana08}; see also \cite{schouten31,ehrenborg99}. They can represented by the following states:
\begin{equation}
	\label{borland dennis orbits}
\begin{aligned}
	\ket{\psi_A} &= \frac 1 {\sqrt 2} \left(
		\ket 1 \wedge \ket 2 \wedge \ket 3 +
		\ket 4 \wedge \ket 5 \wedge \ket 6
	\right), \\
	\ket{\psi_B} &= \frac 1 {\sqrt 3} \left(
		\ket 1 \wedge \ket 2 \wedge \ket 4 +
		\ket 1 \wedge \ket 3 \wedge \ket 5 +
		\ket 2 \wedge \ket 3 \wedge \ket 6
	\right), \\
	\ket{\psi_C} &= \frac 1 {\sqrt 2} \ket 1 \wedge \left( \ket 2 \wedge \ket 3 + \ket 4 \wedge \ket 5 \right), \\
	\ket{\psi_D} &= \ket 1 \wedge \ket 2 \wedge \ket 3.
\end{aligned}
\end{equation}
Observe that $\ket{\psi_C}$ is the antisymmetrization of a biseparable pure state, while the class of $\ket{\psi_D}$ is equal to set of Slater determinants---the fermionic equivalent of a product state.
The states $\ket{\psi_A}$ and $\ket{\psi_B}$ are reminiscent of genuinely entangled GHZ and W states for three qubits (cf.~\autoref{three qubits example}).
This remarkable correspondence between the entanglement classification of the Borland--Dennis setup and the classification of the three-qubit system can be explained precisely in a group-theoretical way \cite{levayvrana08}.

We now describe the corresponding entanglement polytopes (\autoref{borland dennis polytopes}):
The polytope of the first class is given by the solution of the quantum marginal problem, \eqref{borland dennis}, since the class is dense in the set of pure states.
The polytope of the second class is obtained by replacing the vertex $(0.5,0.5,0.5)$ (the origin) with $(2/3,2/3,2/3)$ (the local eigenvalues of the state $\ket{\Psi_B}$). This can be established geometrically by generalizing the argument used in the proof of \autoref{boson lemma}, cf.~\cite{kirwan84}.
The entanglement polytope of the third class is given by a line segment, and the polytope of the class of Slater determinants is given by single point. 
Thus the correspondence between the Borland--Dennis system and the three-qubit system is also manifest on the level of entanglement polytopes: the fermionic entanglement polytopes appear as the ``anti-symmetrization'' of the three-qubit polytopes (cf.~\autoref{three qubit polytopes}).


\clearpage

\begin{figure}
  \includegraphics[height=3.4cm]{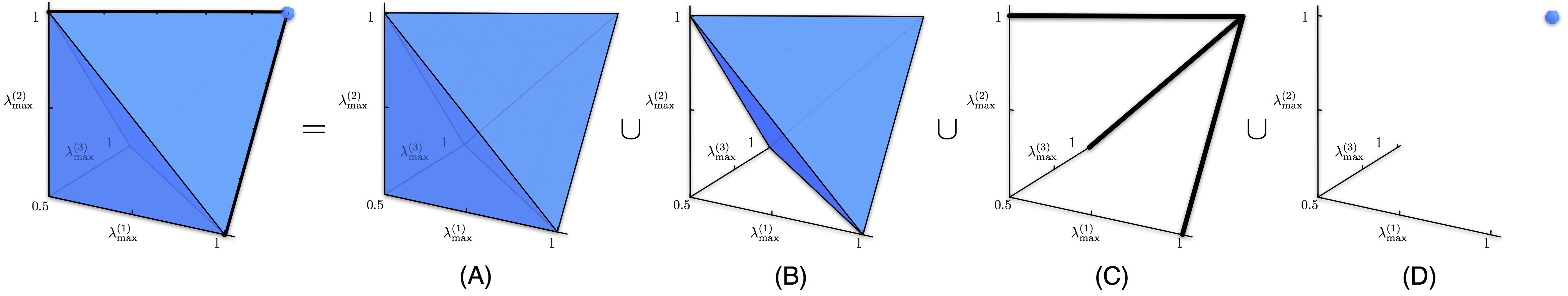}
  \caption{Entanglement polytopes for three qubits:
  (A) GHZ polytope (entire polytope, i.e., upper and lower pyramid),
  (B) W polytope (upper pyramid),
  (C) three polytopes corresponding to EPR pairs shared between any two of the three parties (three solid edges in the interior),
  (D) polytope of the unentangled states (interior vertex).}
  \label{three qubit polytopes}
\end{figure}

\clearpage

\begin{figure}
  \includegraphics[height=4cm]{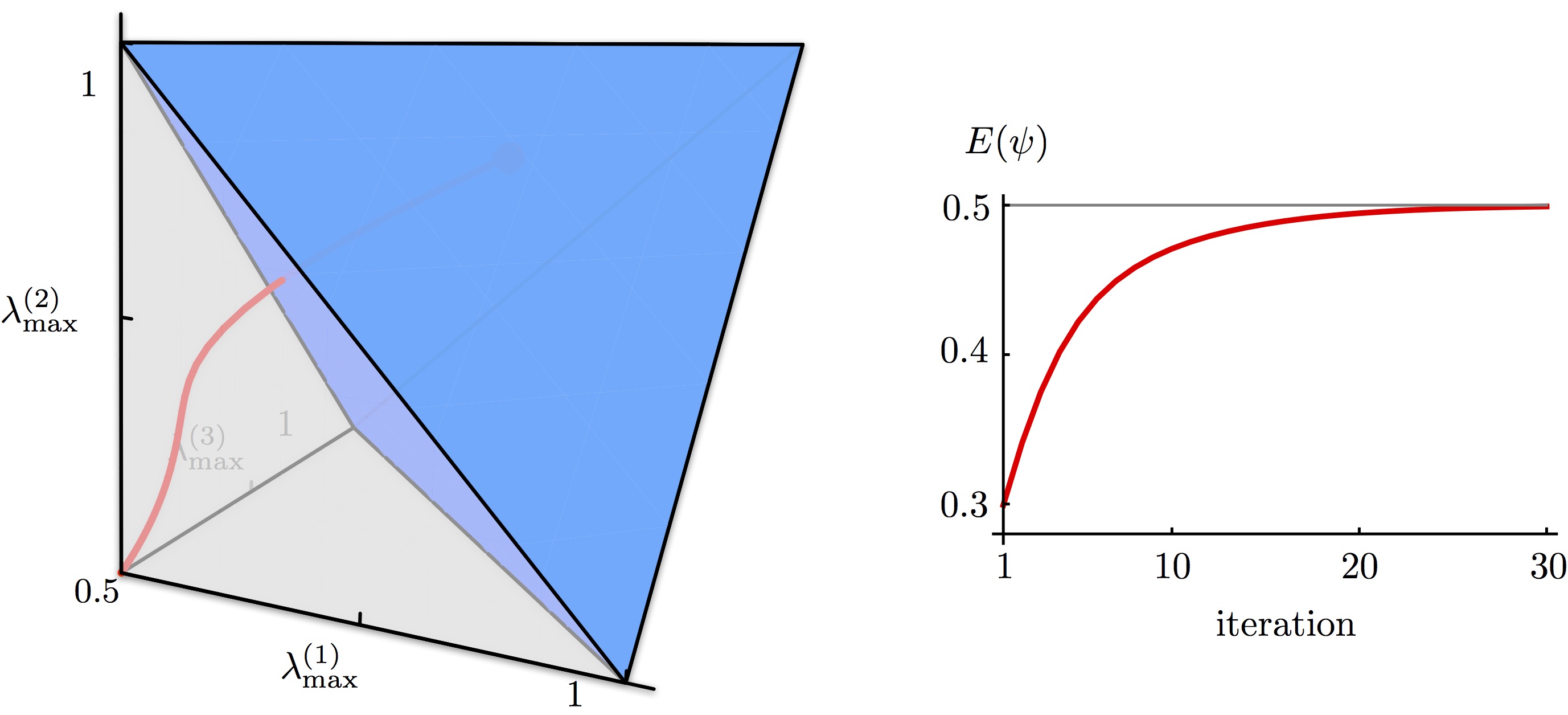}
  \caption{Illustration of entanglement distillation for the quantum state $\rho = \ketbra \psi \psi$ discussed in \autoref{three qubits example}: by following the gradient flow (curved line), the origin $O=(0.5,0.5,0.5)$ is reached asymptotically. The corresponding quantum state has maximal linear entropy of entanglement.}
  \label{three qubit distillation}
\end{figure}

\clearpage

\begin{figure}
  \includegraphics[width=0.9\textwidth]{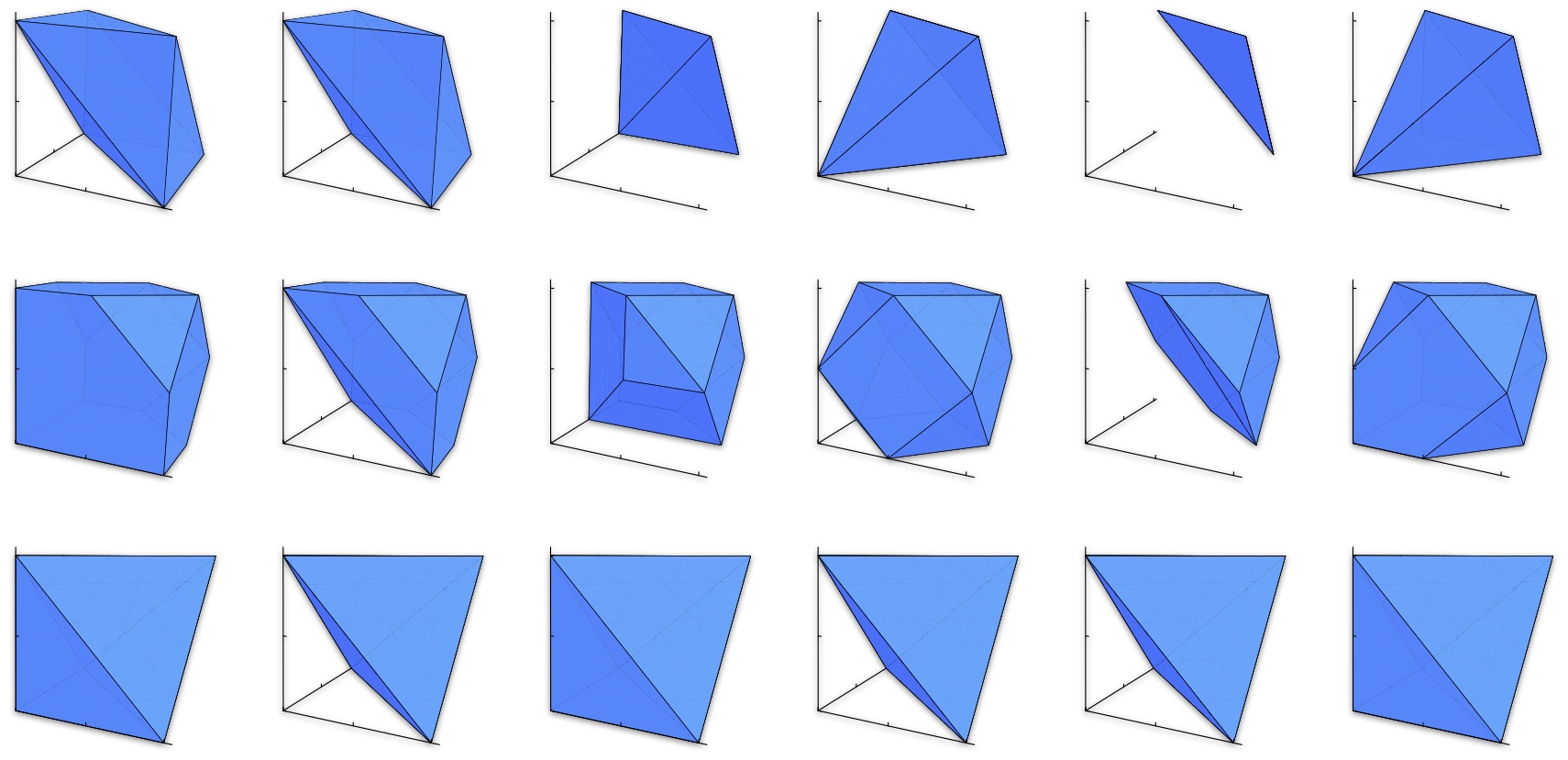}
  \caption{
  Three-dimensional cuts through the \emph{non-trivial} full-dimensional
  entanglement polytopes for four qubits. The columns correspond to the first
  six polytopes in \autoref{four qubits table}.
  The last coordinate is fixed in every row---to values
  $0.5$, $0.75$, and $1$, respectively, with the remaining three variables shown.
  \newline
  From the first row, one can see that only two of these entanglement polytopes
  (No.\ 4 and 6) include the maximally mixed vertex $(0.5, 0.5, 0.5,
  0.5)$. These reach the maximal value for the entropy of entanglement.
  The four polytopes missing the maximally mixed vertex cannot be
  distinguished from product states using polynomial entanglement
  monotones (c.f.~\autoref{entropy of entanglement}), showing that
  our approach is stronger in that regard.
  \newline
  The final row exhibits the behavior of the class when the fourth particle
  is projected onto a generic pure state. We then recover a three-qubit
  entanglement polytope on the remaining coordinates. Polytopes 1, 3,
  and 6 give the full three-qubit polytope implying
  that a GHZ-type vector is generated.
  Polytopes 2, 4, and 5 collapse to the upper pyramid of Fig.~1.  Hence these entanglement classes produce
  a $W$-type state when the first particle is projected onto a pure
  state. It follows that the mixed 3-tangle (and any other convex-roof
  extension of a polynomial monotone) vanishes on the mixed state
  generated by tracing out the last particle of states in these
  classes.
  This observation allows us to graphically recover some
  properties calculated algebraically in
  \cite{verstraetedehaenedemooretal02}, such as the vanishing
  3-tangle for the class $L_{{ab}_3}, a=b=0$ (which corresponds to
  polytope 5).
  }
  \label{fig:four qubits}
\end{figure}

\clearpage

\begin{figure}
  \includegraphics[height=3.6cm]{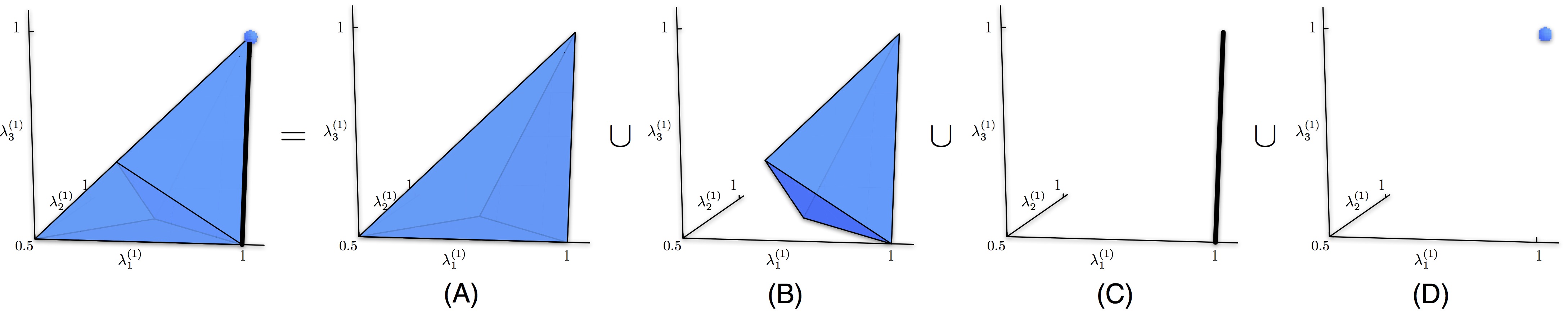}
  \caption{Entanglement polytopes for three fermions with local rank six:
  (A) class of $\ket{\psi_A}$,
  (B) class of $\ket{\psi_B}$,
  (C) class of fermionic ``biseparable'' states,
  (D) class of Slater determinants.}
  \label{borland dennis polytopes}
\end{figure}


\clearpage

\begin{table}
  \begin{tabular}{c|cc|cccccc}
    Covariant & Degree & Weight & $\ket\GHZ$ & $\ket W$ & $\ket{B1}$ & $\ket{B2}$ & $\ket{B3}$ & $\ket\SEP$ \\
    \hline
    $f$ & $1$ & $(1,1,1)$ & $\times$ & $\times$ & $\times$ & $\times$ & $\times$ & $\times$ \\
    $H_x$ & $2$ & $(2,1,1)$ & $\times$ & $\times$ & $\times$ & 0 & 0 & 0 \\
    $H_y$ & $2$ & $(1,2,1)$ & $\times$ & $\times$ & 0 & $\times$ & 0 & 0 \\
    $H_z$ & $2$ & $(1,1,2)$ & $\times$ & $\times$ & 0 & 0 & $\times$ & 0 \\
    $T$ & $3$ & $(2,2,2)$ & $\times$ & $\times$ & 0 & 0 & 0 & 0 \\
    $\Delta$ & $4$ & $(2,2,2)$ & $\times$ & 0 & 0 & 0 & 0 & 0 \\
  \end{tabular}
  \caption{Three-qubit covariants---labeled as in \cite[p.~33]{luque07}---with their degree and weight (see \autoref{examples} for conventions),
  and their vanishing behavior on the quantum states representing the six entanglement classes ($\times$ denotes non-vanishing).
  The invariant $\Delta$ is Cayley's hyperdeterminant, which is closely related to the 3-tangle \cite{coffmankunduwootters00,miyake03}.}
  \label{three qubit covariants}
\end{table}

\clearpage

\begin{table}
  \begin{tabular}{l|c}
    Representative & Number of embeddings \\
    \hline
    $\ket\GHZ \otimes \ket\uparrow$ & 4 \\
    $\ket W \otimes \ket\uparrow$ & 4 \\
    $\ket\EPR \otimes \ket\EPR$ & 3 \\
    $\ket\EPR \otimes \ket\uparrow \otimes \ket\uparrow$ & 6 \\
    $\ket\uparrow \otimes \ket\uparrow \otimes \ket\uparrow \otimes \ket\uparrow$ \qquad & 1
  \end{tabular}
  \caption{
    Bi-separable classes of four-qubit states. Each class corresponds to an
    embedding of a three-qubit entanglement polytope from
    \autoref{three qubits example} into the
    four-qubit polytope $\Delta$ in the obvious way. The right column
    gives the number of permutations of qubits that lead to distinct
    embeddings.
  }
  \label{four qubit lower dimensional}
\end{table}

\clearpage

\begin{table}
  \begin{tabular}{c|cl|cc|c|c|l}
    No & Family \cite{verstraetedehaenedemooretal02}~ & Parameters
    & Vertices~ & Facets &
    Perms & $E(\Delta_{\mathcal C})$ & Vertices compared to full
    polytope $\Delta$ \\
    \hline
    1 & $L_{0_{7\oplus \bar{1}}}$ & & 12 & 13 & 4 & 0.482143
    & $(0.5, 0.5, 0.5, 0.5)$ replaced by $(0.75, 0.5, 0.5, 0.5)$\\
    2 & $L_{0_{5\oplus 3}}$ & & 10 & 13 & 4 & 0.458333
    & $(0.5, 0.5, 0.5, 0.5), (1, 0.5, 0.5, 0.5)$ missing \\
    3 & $L_{a_4}$ & $a=0$ & 9 & 14 & 6 & 0.45
    & $(0.5, 0.5, 0.5, 0.5), (0.5, 1, 0.5, 0.5), (0.5, 0.5, 0.5, 1)$ missing  \\
    4 & $L_{\text{abc}_2}$ & $c = a = -b \neq 0$ & 8 & 16 & 1 & 0.5
    & all three-partite entangled vertices missing \\
    5 & $L_{\text{ab}_3}$ & $b=a=0$ & 7 & 9 & 1 & 0.375
    & all three- and four-partite entangled vertices missing \\
    6 & $L_{a_2b_2}$ &  $b=-a\neq0$ & 10 &14 & 6 & 0.5
    & $(0.5,1,0.5,0.5), (0.5,0.5,0.5,1)$ missing \\
    7 & $G_{abcd}$ & generic & 12 & 12 & 1 & 0.5 & n/a --- this is the full polytope
  \end{tabular}
  \caption{Entanglement polytopes for genuinely four-partite entangled states.
    The second column specifies one choice of family and parameters (according to the
        classification of \cite{verstraetedehaenedemooretal02}, as corrected in \cite{chterentaldokovic07}) which gives rise to the
    polytope shown (there might be further choices, partly because the parametrization
    of \cite{verstraetedehaenedemooretal02} is not always unique).
    ``Perms'' is the number of distinct polytopes one obtains when permuting the qubits. As
    defined in the main text, $E(\Delta_{\mathcal C})$ is
    the maximal linear entropy of entanglement in the class. The
    right-most column is a recipe for obtaining the given polytope out of
    the full one $\Delta$.}
  \label{four qubits table}
\end{table}

\end{document}